\theoremstyle{plain}
\newtheorem{theorem}{Theorem}[section]
\newtheorem{proposition}[theorem]{Proposition}
\newtheorem{lemma}[theorem]{Lemma}
\newtheorem{corollary}[theorem]{Corollary}
\theoremstyle{definition}
\theoremstyle{remark}
\DeclareMathOperator*{\argmax}{arg\,max}
\DeclareMathOperator*{\argmin}{arg\,min}
\DeclareMathOperator*{\proj}{proj}
\newcommand\upquote[1]{\textquotesingle#1\textquotesingle}
\newcommand{\pro}{\mathbb P}
\newcommand{\ie}{\textit{i.e.}}
\newcommand{\eg}{\textit{e.g.}}
\title{A Learning and Control Perspective for Microfinance}
\date{} 					% Or removing it
\author{Christian Kurniawan$^+$\\
    Department of Electrical and Computer Engineering\\
    Carnegie Mellon University\\
    Pittsburgh, Pennsylvania, United States\\
    \texttt{christian.paryoto@gmail.com}\\
    %% examples of more authors
    \And
    Xiyu Deng$^+$\\
    Department of Electrical and Computer Engineering\\
    Carnegie Mellon University\\
    Pittsburgh, Pennsylvania, United States\\
    \texttt{xiyud@andrew.cmu.edu}\\
    \AND
    Adhiraj Chakraborty\\
    Carnegie Mellon University Silicon Valley\\
    California, United States\\
    \texttt{adhirajc@alumni.cmu.edu}\\
    \And
    Assane Gueye\\
    Carnegie Mellon University Africa\\
    Kigali, Rwanda\\
    \texttt{assaneg@andrew.cmu.edu} \\
    \And
    Niangjun Chen\\
    Information Systems Technology and Design\\
    Singapore University of Technology and Design\\
    Singapore\\
    \texttt{niangjun\_chen@sutd.edu.sg} \\
    \And
    Yorie Nakahira\\
    Department of Electrical and Computer Engineering\\
    Carnegie Mellon University\\
    Pittsburgh, Pennsylvania, United States\\
    \texttt{ynakahir@andrew.cmu.edu} \\
}
\begin{document}
\maketitle
\def\thefootnote{+}\footnotetext{These authors contributed equally to this work.}\def\thefootnote{\arabic{footnote}}

\begin{abstract}
	Microfinance, despite its significant potential for poverty reduction, is facing sustainability hardships due to high default rates. Although many methods in regular finance can estimate credit scores and default probabilities, these methods are not directly applicable to microfinance due to the following unique characteristics: a) under-explored (developing) areas such as rural Africa do not have sufficient prior loan data for microfinance institutions (MFIs) to establish a credit scoring system; b) microfinance applicants may have difficulty providing sufficient information for MFIs to accurately predict default probabilities; and c) many MFIs use group liability (instead of collateral) to secure repayment. Here, we present a novel control-theoretic model of microfinance that accounts for these characteristics. We construct an algorithm to learn microfinance decision policies that achieve financial inclusion, fairness, social welfare, and sustainability. We characterize the convergence conditions to Pareto-optimum and the convergence speeds. We demonstrate, in numerous real and synthetic datasets, that the proposed method accounts for the complexities induced by group liability to produce robust decisions before sufficient loans are given to establish credit scoring systems and for applicants whose default probability cannot be accurately estimated due to missing information. To the best of our knowledge, this paper is the first to connect microfinance and control theory. We envision that the connection will enable safe learning and control techniques to help modernize microfinance and alleviate poverty.

\end{abstract}

% keywords can be removed
\keywords{Microfinance \and Control System \and Learning}

\section{Introduction}
\label{S:Introduction}
\textbf{Potential and challenges in microfinance.}
Microfinance is a category of financial services that gives small loans to low-income people who may not have access to or be eligible for conventional finance~\cite{yunus2007banker, Econ-MF-Edition2,Kamanza2014CausesOD}. Microfinance has demonstrated potential for poverty reduction, financial inclusion, and economic development~\cite{schreiner2001seven, mersland2010microfinance}. Despite the proven potential, microfinance has experienced several hardships, primarily due to the increase in loan default rates~\cite{nawai2012factors, addae2014causes}.

Although there are many lending strategies and risk control methods in regular finance, they cannot be directly applied to microfinance for the following reasons. First, most existing methods use credit scores to predict the loan default probability when making lending decisions~\cite{ala2016classifiers, shi2019credit, ampountolas2021machine}. However, under-explored (developing) areas without prior loan histories or proper financial systems have insufficient data to establish such credit-scoring procedures. Second, due to the lack of proper state mechanisms, it is difficult for some applicants to provide sufficient information for estimating their credit scores and default probability accurately\footnote{\label{ft:missing-info} For example, it may be costly for some applicants in Africa to obtain proof of residence.}. Third, regular loans are given to individuals with collateral, whereas microfinance often uses group liability, where all individuals in the group are liable if any borrower defaults, to secure repayment~\cite{lehner2009group,kodongo2013individual, haldar2016group}. Group liability can improve the repayment rate by incentivizing members to look after each other, but it has the pitfalls of inducing defaults for borrowers who otherwise have the ability to repay. However, because the approaches for granting regular loans do not sufficiently account for the complexities of group liability, group loans have tended to result in greater default rates~\cite{nandhi2012incidence,allen2016optimal}. Fourth, there is increasing evidence that loan approval algorithms based on black-box machine learning techniques may be biased and discriminatory against minorities~\cite{zliobaite2015survey,corbett2018measure}. Such biases are particularly problematic in fulfilling the objectives of microfinance to provide more opportunities for disadvantaged populations and underdeveloped regions. The complexities of having multiple such populations/regions also imposed additional challenges in allocating microfinance resources to balance different fairness/inclusion objectives. Due to the lack of methodologies that can systematically balance the risks, fairness, and multi-faceted objectives of microfinance, microfinance has relied heavily on the judgment of loan officers. Such operations have sometimes resulted in decisions that let Portfolio at Risk (PAR)\footnote{PAR is defined as the percentage of overdue loans.} exceed a level that is sustainable for continuing microfinance operations\footnote{World Bank suggested 5\% as the upper bound of PAR for sustainability~\cite{ledgerwood2013new}.}~\cite{yimga2016impact,huo2017risk,chikalipah2018credit}.

\textbf{Our focus and contributions.} 
There is an urgent need to modernize microfinance by establishing models and algorithms that account for the aforementioned characteristics and challenges. In this paper, we establish a novel microfinance model and propose an algorithm for learning loan approval policies. We summarize the features of the proposed techniques below. 
\begin{enumerate}[label=(\roman*)]
    \item Our methods can make robust decisions \textit{before} enough loans are given to accurately estimate the default probability and credit scores (\cref{F:changing distribution}) by directly learning the optimal policy parameters without the intermediate step of default probability estimation. 
    \item Our methods degrade more gracefully for increasing levels of missing information in the applications (\cref{F:empty entry comparison}) and exploit the potential of group liability while avoiding its pitfalls (\cref{F:utilities_comparison_10empty_boxplot_group}). The microfinance model accounts for missing information and group liability, and policy learning processes converge to optimal policy parameters in the presence of both. 
    \item Our methods can systematically optimize competing objectives such as risks, socio-economic impacts, and active and passive fairness among different groups (\cref{F:fairness,F:default vs acceptance}). The prioritization among different objectives can be specified in the utility function, and the policy has an interpretable structure that informs which factors contributed positively/negatively to applicant approvals. 
\end{enumerate}
To the best of our knowledge, this paper is the first to use control-theoretic techniques to learn microfinance policy parameters without relying on credit scores directly. Our presentation of microfinance models as a \textit{control problem} opens the door to using modern control and learning theory to modernize microfinance, which in turn helps to achieve 8 of 17 Sustainable Development Goals adopted by the United Nations~\cite{The17Goals}.

\begin{figure}[htbp!]
\centering
\centerline{\includegraphics[width=0.8\columnwidth]{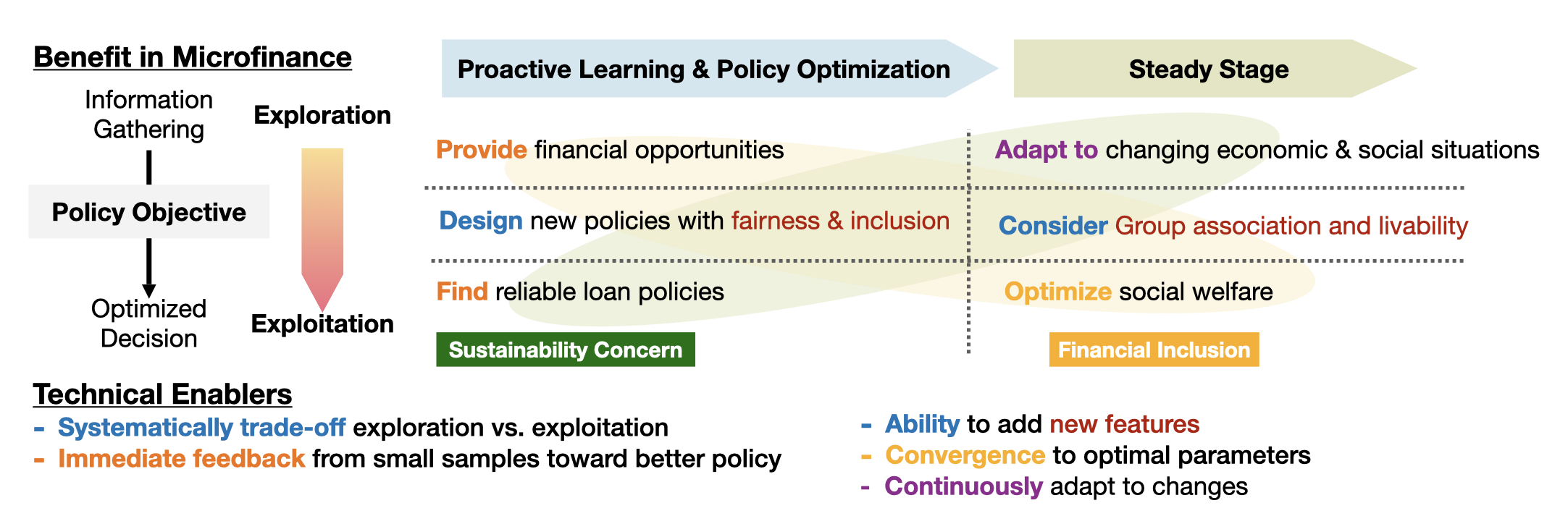}}
\caption{Features of the proposed algorithm and their technical enablers. }
\label{fig:summary}
\end{figure}

\subsection*{Related Works}
\label{SS:Related_Works}

\textbf{Tools developed for regular finance.} The most standard approach to decide regular finance is based on credit scores, which inform the likelihood of each application to default~\cite{klaff2004new,puro2010borrower}. Other approaches consider the loan approval process as a binary classification problem to be solved using machine learning methods such as discriminant analysis~\cite{baesens2003benchmarking}, logistic regression~\cite{ala2016classifiers, vaidya2017predictive}, and neural networks~\cite{abdou2008neural,chen2018loan,condori2021rural}. Multi-layers perceptron neural networks are widely used in automatic credit scoring systems with high accuracy and efficiency~\cite{zhao2015investigation,correa2011genetic}. To achieve a higher prediction accuracy, some studies utilize the random forest approach with feature selection and grid search to reduce the influence of irrelevant and redundant features~\cite{wang2012two, van2016novel}. Other studies adopt the SVM (support vector machine) algorithm to also improve the prediction accuracy with fewer features~\cite{huang2007credit,chen2010combination}. In a different direction, some studies predict the default probability of the applicants utilizing logistic regression and its extensions~\cite{bolton2010logistic,sohn2016technology}. In addition, works such as~\cite{ampountolas2021machine} examine and compare various machine learning algorithms to classify borrowers into various credit categories. However, as mentioned, these approaches mostly focused on offline learning, assuming that accurate homogeneous data is abundantly available. These assumptions may not hold in many under-explored (developing) regions, such as rural Africa, where most of the clients' data is missing. Furthermore, collecting such information might be expensive and the data might be unreliable because of the lack of proper state mechanisms.

\textbf{Fairness in machine learning.} Moreover, machine learning approaches may also introduce fairness issues as it is known to be a controversial topic in the field. Especially for the black-box decision-making tool as it will introduce several explicit and implicit biases to the results~\cite{d2020fairness,corbett2018measure,agarwal2021trade,bantilan2018themis,burrell2016machine}. A few recent works suggested that the decisions made by these black-box processes may have hidden biases and discrimination against under-served populations~\cite{hall2021united, chen2019fairness}. Existing approaches to deal with these biases include modification in the data generation process to ensure the data sets have sufficient diversity~\cite{barbierato2022methodology}, as well as pre-, mid-, and post-processing. For example, \cite{zemel2013learning} introduced a pre-processing approach called LFR (Learned Fair Representations), a discriminative clustering model in which the initial data point is mapped to the distribution in a new input space to conceal any information regarding the data point being a member of a protected subgroup, preserving individual fairness. In another example, \cite{lee2014fairness} proposed the Fairness-Aware BPRMF method as a mid-prepossessing approach by pairing the Bayesian Personalized Ranking Model (BPR) with a combination of matrix factorization (MF). \cite{kim2019multiaccuracy} proposed a post-processing algorithm, MULTIACCURACY-BOOST, that contains an auditor algorithm that iteratively makes mistakes on every sub-population in the black box's classified data until the multi-accuracy constraints of equality are satisfied.

\textbf{Optimization methods.} \label{optimizationfirstord} To find the optimal decision policies, the learning algorithms employ existing optimization techniques in stochastic gradient descent (SGD)~\cite{robbins1951stochastic}, reinforcement learning, and optimal control. For instance, bandit algorithms such as AdaGrad~\cite{duchi2011adaptive}, AdaDelta~\cite{zeiler2012adadelta}, RMSProp~\cite{mukkamala2017variants}, and Adam~\cite{kingma2014adam} are commonly employed. While there is no guarantee for the first-order methods to converge to the optimal solution in general, they will converge to the global optimum solution when the objective function is convex or strongly convex~\cite{zinkevich2003online,hazan2007logarithmic}.

\textbf{Handling missing data} The learning frameworks need to also handle the missing data problem separately. As missing data is a common real-world problem, there is a rich line of literature dealing with it. The classical methods such as mean imputation~\cite{little2002statistical}, expectation maximization~\cite{dempster1977maximum, ghahramani1993supervised, honaker2011amelia}, least squares~\cite{van2011mice, bo2004lsimpute}, K-nearest neighbors~\cite{troyanskaya2001missing}, and regression tree~\cite{burgette2010multiple}, are commonly employed. Recently, an optimization-based approach~\cite{bertsimas2017predictive} has also been considered. 

\textbf{Empirical investigation into microfinance.} Extensive efforts have also been devoted to understanding the economics of microfinance and its sustainability~\cite{Econ-MF-Edition2,CERISE-SPI,MFI-Sustain2010,WB_FSS2018,van2012impact,duvendack2011evidence}. The loan default probability has been determined as one key challenge for microfinance's sustainability. Thus, numerous studies have been carried out to identify the contributing factors to microfinance default behaviors~\cite{Kamanza2014CausesOD,Muthoni2016AssessingIC,Asongo2014TheCO,Kofi2016Causes,dorfleitner2016repayment}. Some features, such as the borrower's gender, education level, family size, residential distance to the institution, lending method, activities financed by the loan, total loan received, and loan monitoring method, among others, have been observed to affect the repayment performance significantly~\cite{field2008repayment, mpogole2012multiple, jote2018determinants,nawai2010determinants}. 

\textbf{Group lending in microfinance.} The studies have also identified that group lending is an essential cornerstone of microfinance. Here, loans are made to small groups or cooperatives that made the members share the liability jointly~\cite{Schurmann_Johnston_2009, besley1995group,ghatak1999group}. This joint-liability model uses social, rather than material, collateral, leveraging peer pressure and community information to overcome asymmetric information in microfinance, leading to better repayment behavior~\cite{PakistanRepayment}. In this model, when one group member defaults, other members jointly bear the cost. Since members of the group are supposed to know each other, this liability structure will help to overcome the information asymmetrical, inherent in lending to poor borrowers, making group-based lending efficient and effective with low transaction costs for the provider~\cite{armendariz2000microfinance,postelnicu2014defining}. A study carried out in Pakistan concluded that with group lending, borrowers are about $60\%$ times as likely to miss a payment in any given month under joint liability relative to individual liability~\cite{PakistanRepayment}. In Africa, there has been also an increasing interest to gear group lending toward traditional group savings structures such as "tontines" in Senegal, "esusu" in Nigeria, "ekub" in Sudan, Eritrea, and Cameroon, or "jangi" in Cameroon~\cite{akanga_reuben_johnson1_2021}. In these Rotating Savings and Credit Associations (ROSCAs), members meet regularly and pay a predetermined sum of money at each meeting~\cite{KIMUYU19991299}. The sum of the payments is then given to a group member (usually the host of the meeting), determined via a lottery in the previous meeting, to pay out the loan of this member~\cite{TontineSenegal_Owen2006}. Among these ROSCAs, the women's association, despite its informal nature, has been one of the most resilient communities where they have survived in many areas where formal microfinancing communities have failed. Because of that, microfinancing through women's associations has recently attracted a lot of interest~\cite{WomenAssoc2002,WomenAssoc2000,abdallahalihal-03375661}.

\section{Problem Statement}
\label{S:Problem_Statement}
In this section, we introduce a novel microfinance model and define the design objectives of the microfinance decision policy.

\subsection{Notation} 
\label{SS:Notation}
We use capital letters for random variables, \eg, $A$, and lowercase letters for their specific realization, \eg, $a$. A square bracket is used to represent the entries of a vector, \eg, $s = \begin{bmatrix} s [1], s[2], \cdots, s[n] \end{bmatrix}^\intercal$, and a regular bracket is used for the input of a function, \eg, $f(x)$. We use $\pro (E)$ to denote the probability of an event $E$ or the density function of a random variable $E$. Lastly, we use $\mathbb{Z}$, $\mathbb{Z}_+$, $\mathbb{R}$, $\mathbb{R}_+$ to denote the sets of integers, non-negative integers, real numbers, and non-negative real numbers.

\subsection{Microfinance model}
\label{SS:Microfinance_model}
A microfinance application can be modeled by application properties, an MFI's decision, and outcomes. The MFI receives applications from individuals or groups of applicants. An application is parameterized by the group size $M$ ($M=1$ for individual applicants), intrinsic features that govern default probability $S \in \mathcal S$, and the MFI's accessible information $\hat{S}\in \hat{\mathcal S}$. We consider the setting that MFI receives applications and makes decisions at each lending period, indexed by $t \in \{1,2,\cdots, T\}$, and continuously learns more optimal policies over the time horizon $T$ as follows. At the beginning of each lending period $t$, the MFI receives $N_t$ financing applications, indexed by $i\in\mathcal{N}_t =  \{1, 2, \cdots, N_t\}$. Application $i$ has group size $m_{i,t} \overset {i.i.d}\sim \pro(M = m_{i,t})$, unobserved underlying features $s_{i,t} \overset {i.i.d}\sim \pro(S)$, and accessible information $\hat s_{i,t} \sim \pro(\hat S \mid S = s_{i,t})$. When some information in $S$ is unavailable, it corresponds to the empty value $\emptyset$ entries in $\hat{S}$. The set of the available information in $\hat S$ is denoted by $U(\hat{S}) =\{j : \hat S[j] \neq \emptyset \}$. The MFI's lending decision is denoted by a random variable $A$:
\begin{align}\label{eq:decision}
    A = \begin{cases}
    1 & \text{for approval}, \\
    0 & \text{for rejection}.
    \end{cases}
\end{align}
The MFI's approval/reject probability, $\pro(A \mid \hat{S}, M)$, for a certain application is based on the lending policy $\pi_Z$, \ie,
\begin{align}
    \label{eq:prob_accptd}
    \mathbb{P}(A \mid \hat{S},M) =  \pi_{Z}(\hat{S}, M, A).
\end{align}
Here, $\pi_{Z}$ is controlled by policy parameter $Z$, defined later in \eqref{eq:approve_parameter}. 

As the amount of loans given out by MFIs is normally small for each individual, we assume the amount of loan and its interest rate are identical among members within the group and, without loss of generality, are set as $1$ and $r$, respectively. Thus, an approved application of group size $M$ receives a loan of size $M$ and must return the principal and interest of $M\cdot(r+1)$ at the end of the lending period, where the loan liability is imposed on the whole group. The outcome of the loan (the ability of the applicant to return) is given by 
\begin{align}
    B = \begin{cases}
        1 & \text{for return},\\
        0 & \text{for default}.
    \end{cases}
\end{align}
We assume $S$ is independently drawn from some underlying population feature distribution $\pro(S)$; $\hat{S}$ is determined based on how features are reflected in the accessible information $\pro(\hat S \mid S)$; and the outcome of the application is governed by $\pro (B \mid S,M) = \pro (B \mid S,\hat{S},M)$, which does not depend on $\hat{S}$ given $S$ and $M$. 

\subsection{Microfinance decision criteria}
\label{SS:Microfinance_decision_criteria}
The MFI uses policy $\pi_{z_t}$ to decide on MFI's action $a_{i,t}\sim \pro(A \mid \hat{S}=\hat{s}_{i,t},\ M=m_{i,t})=\pi_{z_t}(\hat s_{i,t}, m_{i,t}, a_{i,t})$. At the end of the lending period, the MFI observes the loan outcome $b_{i,t} \overset {i.i.d}\sim \pro (B \mid S=s_{i,t},\ M = m_{i,t}) $ and learns (updates) the policy parameter to $z_{t+1}$, which is to be used in the next lending period. Here, $z_t$ represents the learned policy parameter at time $t$ and is updated to minimize the objectives specified by MFIs. Microfinance has multifaceted (non-mutually exclusive) objectives such as financial inclusion, fairness, social and economic impact, and sustainability. These objectives are captured by a utility function $\mathcal{R}(\{\hat{s}_{i,t},m_{i,t},a_{i,t},b_{i,t}\}_{i \in \mathcal{N}_t})$ of all applications $i \in \mathcal{N}_t$.

We defined $V(z_t)$ as the expected utility of the lending period $t$ with policy $\pi_{z_t}$ and control parameter $z_t$, \ie,
\begin{align}
V(z_t) = \mathbb{E}(\mathcal{R}(\{\hat{s}_{i,t},m_{i,t},a_{i,t},b_{i,t}\}_{i \in \mathcal{N}_t})).    
\end{align}
We then consider two types of utility rewards: decomposable and non-decomposable.

\subsubsection{Case 1: Decomposable rewards}\label{para:case1}

Here, we do not consider any external effects of decisions. Therefore, the total reward at time $t$ \textit{can} be decomposed as the sum of individual rewards, \ie, 
\begin{align}\label{eq:decomp_rewards}
\mathcal{R}\left(\{s_{i,t}, m_{i,t}, a_{i,t}, b_{i,t}\}_{i \in \mathcal{N}_t}\right) = \cfrac{1}{{N}_t}\sum_{i = 1}^{{N}_t}R\left(s_{i,t},m_{i,t}, a_{i,t}, b_{i,t}\right),
\end{align}
for some function $R$. For example, we considered the utility function for each application, $R(\hat s, m, a, b)$, in the following form,
\begin{align}\label{eq:utility}
    R(\hat s, m, a, b) = \begin{cases}
    m(r + e); & a = 1, b = 1, \\
    m(-1 + e); & a = 1, b = 0, \\
    0 ; & a = 0.
    \end{cases}
\end{align}
Here, $e \in \mathbb R_+$ is the financial inclusion factor to motivate the MFI toward approving more applications\footnote{MFIs often receive subsidies from international development agencies and governments to help offset high risks of lending without collateral.}.

\subsubsection{Case 2: Non-decomposable rewards}\label{para:case2}

Here, we consider the external effects of MFI's decision. Thus, the total reward \textit{cannot} be decomposed as the sum of individual rewards, but $\mathcal{R}(\{\hat{s}_{i,t},m_{i,t},a_{i,t},b_{i,t}\}_{i \in \mathcal{N}_t})$ will be a function of $R(\hat s, m, a, b)$ and other objectives. This is particularly useful when we design reward functions that account for the social impact on individual decisions such as fairness among different demographics. 

\textbf{Accounting for fairness.} Microfinance decisions should be fair and avoid discrimination against certain populations or regions. Besides the explicit discrimination feature that we could remove directly, we also consider the following two types of implicit fairness. 

\textit{Type 1 (Outcome fairness):} Type 1 fairness actively sets a target approval rate $\Pi^{*}(\xi)$ for the desired approval rate for applications with attribute $\xi$. For example, the loan approval policy of an MFI has a target of at least $\Pi^{*}$ approval rate for female applicants as a criterion for gender equality (\cite{Khaleghi2020}). For type 1 fairness, we want to have 
\begin{align}
    \pro \left(a_{i,t}=1 \mid i\in \mathcal{N}_{t,\xi} \right) \ge \Pi^*(\xi), \forall \xi \in \Xi,
\end{align}
where $\Xi = \{\xi, \xi', \cdots\}$ is the set of attributes for which we have a target approval rate and $\mathcal{N}_{t,\xi} = \{i\in \mathbb{N}:i\in \mathcal{N}_t, \hat{S}_\xi \in \hat{S} \}$ is the set of applications with attribute $\xi$ at time $t$. To achieve type 1 fairness, we can design the reward function as
\begin{align}
    \mathcal{R}\left(\{s_{i,t}, m_{i,t}, a_{i,t}, b_{i,t}\}_{i \in \mathcal{N}_t}\right) = \text{other objectives} - \mathcal{F}_1 \cdot \sum_{\xi \in \Xi} \left(\Pi^{*}(\xi)-\rho_{\xi,t}\right)_{+},
\end{align}
where $\mathcal{F}_1 \in \mathbb{R}_+$ is a weighing factor that reflects the relative importance placed upon type 1 fairness, and $\rho_{\xi, t}$ is the current approval rate for applications with $\xi$, \ie,
\begin{align}
     \rho_{\xi, t} = \cfrac{1}{|\mathcal{N}_{t,\xi}|}\sum_{i\in \mathcal{N}_{t,\xi}}a_{i,t}.
\end{align}
We use the notion $(x)_{+} \text{ to be } \max(0,x)$. 

\textit{Type 2 (Statistical parity):} Type 2 fairness enforces fairness among applications with different attributes, for example, male and female. If we would like to have type 2 fairness among applications with attributes $ \xi$ and $\xi'$, then we should have
\begin{align}
    \pro \left( a_i=1 \mid i\in G_{\xi} \right) \approx \pro \left( a_i=1 \mid i \in G_{\xi'} \right). 
\end{align}
To enforce type 2 fairness, we can adjust our reward as,
\begin{align}
    \mathcal{R}\left(\{s_{i,t}, m_{i,t}, a_{i,t}, b_{i,t}\}_{i \in \mathcal{N}_t}\right) = \text{other objectives} - \mathcal{F}_2 \cdot \lVert \rho_{\xi,t} - \rho_{\xi',t} \rVert,
\end{align}
where a larger value of $\mathcal{F}_2$ indicates more emphasis is put on type 2 fairness.

\subsection{Policy learning objectives} 
\label{SS:Policy_learning_objectives}
The MFIs update the policy parameter $z_t$ to converge to the optimal parameters,
\begin{align} \label{eq:object_function}
    z^* = \argmax_{z} V(z), 
\end{align}
and the cumulative utility converges to the optimal policy fast with low policy exploration cost, quantified by
\begin{align} \label{eq:object_function_convergence}
    V(z^*) -\mathbb{E}\left [  \cfrac{1}{T} \sum_{t = 1}^T V(z_t) \right].
\end{align}
The expectation in $V(z)$ is taken over the probability measure involving group liability and missing information. Thus, by construction, the optimization of \eqref{eq:object_function} and \eqref{eq:object_function_convergence} also accounts for the above-mentioned challenges. Additionally, we take interpretability into the design consideration by imposing structures in policy $\pi_z$ so that the parameter $z$ informs how much each entry of available information $\hat S$ and the group size $M$ have contributed toward approvals or denials. 

\section{Methodology}
\label{S:Methodology}
\subsection{Proposed algorithm}
\label{SS:Proposed_Algorithm}

Here, we will introduce novel learning techniques that can produce optimal and fair microfinance decisions when credit-scoring systems cannot function properly due to the scarcity of prior loan data and the uncertainty of missing data. We propose a lending policy $\pi_z$ and consider the decision policy in the form
\begin{align}
    \pi_z(\hat{s}, m, a) = L(q).
\end{align}
$L(q)$ can be any continuously-differentiable monotonically-increasing function of $q$ that map the domain of $q$ to $(0,1)$. For example, we consider the following choice of $L(q)$:
\begin{align} 
\label{eq:L(q)}
L(q) = \cfrac{2\exp(q)}{1+\exp(q)}-1.
\end{align}
Here, $L(q)$ can be thought of as the activation function of the neural networks. However, the traditional approaches to directly employing the neural networks for lending decisions can aggregate the biases toward the initial choice of the approved applicants because populations who never get loan approval are not contained in the data used to learn the decision policy. Unlike the traditional approach, under our proposed policy, people who are less likely to get approved have a non-zero probability of approval to ensure diversity in the training data. 

Here, $\pi_z$ is parameterized by
\begin{align}
\label{eq:approve_parameter}
&z = [\phi^\intercal, \epsilon^\intercal, \gamma^\intercal]^\intercal \in \mathcal{Z} \subset \mathbb {R}^{2n + |\Xi|}, 
\end{align}
where $n$ is the number of features. Intuitively, $\phi \in R^n$ gives weight to the relative importance of each feature to the decision, $\epsilon \in R^n$ accounts for missing information, and $\gamma \in R^{|\Xi|}$ handles the fairness considerations. We update the policy parameter $z_t$ according to:
\begin{align}
    \label{eq:z_update}
    \hat{z}_{t+1} & = z_t + \alpha_t F_{z_t},\\
    \label{eq:z_proj}
    z_{t+1} & = \proj_{\mathcal{Z}} (\hat{z}_{t+1}).
\end{align}
Readers familiar with learning theory would anticipate that $F_{z_t}$ is the gradient of the objective. This is indeed the case as shown in \cref{lm:F equal grad,lm:F no decomsable}. In the following, we give the explicit form of $F_{z_t}$ that is dependent on the rewards, while the algorithm and theoretical analysis can be found in \cref{SS:Convergence_Analysis}.

For the decomposable rewards, we consider 
\begin{align} \label{eq:q_decomposible}
    q =  \cfrac{1}{n} \sum_{j \in U(\hat{s})}\phi[j]\hat{s}[j]  +\epsilon[j].
\end{align}
In this case, $z = [\phi^\intercal, \epsilon^\intercal, \gamma^\intercal] = [\phi^\intercal, \epsilon^\intercal]$ as $\gamma_{\xi} = 0,  \forall \xi$ and $F_{z_t} = \big[F_{z_t}[1], F_{z_t}[2], \cdots F_{z_t}[2n]\big]^\intercal$ is given by 
\begin{align}\label{eq:gradient_long}
    &\ F_{z_t}[k] 
    = \cfrac{1}{N_t}\sum_{i=1}^{N_t} w_{i,t} [k] \big(R(\hat{s}_{i,t}, m_{i,t}, a_{i,t},  b_{i,t}) - \bar{R}_t\big),\\
    &w_{i,t} [k] = \frac{1}{\pi_{z_t}(\hat{s}_{i,t}, m_{i,t}, a_{i,t})} \cfrac{\partial \pi_{z}(\hat{s}_{i,t}, m_{i,t}, a_{i,t})}{\partial z[k]},\\
    &\bar{R}_t = \cfrac{1}{t-1}\sum_{\tau=1}^{t-1} \cfrac{1}{{N}_\tau}\sum_{i=1}^{{N}_\tau} R(\hat{s}_{i,\tau},m_{i,\tau},a_{i,\tau}, b_{i,\tau}).
\end{align}
Here, $\cfrac{\partial \pi_{z_t}(\hat{s}_{i,t}, m_{i,t}, a_{i,t})}{\partial z[k]} = g(\hat{s},k) \cfrac{dL(q)}{dq}$ is the partial derivative of $\pi_z(\hat{s}_{i,t}, m_{i,t}, a_{i,t})$ with respect to the $k$-th entry of $z$ evaluated at $z_t$, where $g$ is defined to be
\begin{align} \label{eq:phi_partial}
    g(\hat{s},k) = \begin{cases}
        \hat{s}[k]; & k\leq n, k \in U(\hat{s}), \\
        1; & k\geq n+1, k \in U(\hat{s}), \\
        0; & \text{otherwise}.
    \end{cases}
\end{align}

For the non-decomposable rewards, we consider
\begin{align}
\label{eq:predicted_probability}
q = \cfrac{1}{n} \sum_{j \in U(\hat{s})  } \phi[j]\hat{s}[j]  +\epsilon[j] + \sum_{\xi} \gamma_{\xi} \ \mathbbm{1}  \{\hat{s}\in\hat{S}_\xi \},
\end{align}
where $\mathbbm{1} \{\hat{s}\in\hat{S}_\xi \}$ is an fairness indicator function of $(\hat{s}\in\hat{S}_\xi)$. $\hat{S}_\xi$ refers to feature values that may introduce discrimination in microfinance lending decisions, such as gender, race, ethnicity, etc. In this case, $z = [\phi^\intercal, \epsilon^\intercal, \gamma^\intercal]^\intercal$, $\gamma = [\gamma_1, \gamma_2, \gamma_3, \cdots]^\intercal$, and $F_{z_t} = \big[F_{z_t}[1], F_{z_t}[2], \cdots F_{z_t}[2n], \cdots, F_{z_t}[2n+|\Xi|]\big]^\intercal$, where $F_{z_t}$ is given by 
\begin{align}
    \label{eq:fair_gradient_long}
    &F_{z_t}[k] = \left(\mathcal{R}(\{s_{i,t}, m_{i,t},a_{i,t}, b_{i,t}\}_{i \in \mathcal{N}_t}) - \bar{R}_t \right) w_t[k],\\
    &w_t[k] = \sum_{i=1}^{N_t}w_{i,t}[k],\\
    &w_{i,t} [k] = \frac{1}{\pi_{z_t}(\hat{s}_{i,t}, m_{i,t}, a_{i,t})} \cfrac{\partial \pi_{z}(\hat{s}_{i,t}, m_{i,t}, a_{i,t})}{\partial z[k]},\\
    &\bar{R}_t = \frac{1}{t-1}\sum_{\tau=1}^{t-1}R(\{s_{i,t},m_{i,t}, a_{i,t}, b_{i,t}\}_{i \in \mathcal{N}_t}).
\end{align}
Here, $\cfrac{\partial \pi_{z_t}(\hat{s}_{i,t}, m_{i,t}, a_{i,t})}{\partial z[k]} = g(\hat{s},k) \cfrac{dL(q)}{dq}$ with $g$ is defined to be
\begin{align} \label{eq:phi_partial_non_decomposed}
    g(\hat{s},k) = \begin{cases}
        \hat{s}[k]; & k\leq n, k \in U(\hat{s}), \\
        1; &  n + 1 \leq k\leq 2n, k \in U(\hat{s}), \\
        \gamma_{k-2n}; & 2n+1 \leq k \leq 2n + |\Xi|, \\ 
        0; & \text{otherwise}.
    \end{cases}
\end{align}

In both cases, the value of $R(\hat{s}_{i,t}, m_{i,t}, a_{i,t}, b_{i,t}) - \bar{R}_t$ and the update size of $z_t$ will become small when $\bar{R}_t$ is sufficiently close to the sample average. The above procedures are summarized in~\cref{alg:parameters_update}. The value of $\alpha_t > 0$ is the step size to update the parameters at lending period $t$. The choice of $\alpha_t$ is crucial for the convergence and learning speed of the algorithm and is studied theoretically in \cref{crl:step_size} and empirically in \cref{SSS:Step_size}. To make sure that the updated parameters stay in the allowable domain $\mathcal{Z}$, step \eqref{eq:z_proj} projects $\hat{z}_{t+1}$ onto domain $\mathcal{Z}$.

\begin{algorithm}[htbp!]
	\caption{Policy Update}
	\label{alg:parameters_update}
	\begin{algorithmic}
	\STATE Initialized $z_1$
	   \FOR {each lending period $t$}
	        \FOR {each application $i$}
    	        \STATE Generate the decision of application $i$ with: 
        	             \begin{align*}
        	                 a_{i,t} = \begin{cases}
        	                    1; & \text{with probability } \pi_{z_t}(\hat{s}_{i,t},1), \\
        	                    0; & \text{with probability } \pi_{z_t}(\hat{s}_{i,t},0).
        	                \end{cases}
        	             \end{align*}\\
                \STATE Observe outcome $b_{i,t}\in\{0,1\}$.
                \STATE Gain utility $R(\hat{s}_{i,t},a_{i,t},b_{i,t})$.
	        \ENDFOR
	        \STATE Compute $F_{z_t}$ from  \eqref{eq:gradient_long} when choosing general case, or \eqref{eq:fair_gradient_long} when choosing fairness case. 
	        \STATE Update $z_{t+1}$ based on \eqref{eq:z_update} and \eqref{eq:z_proj}.
	   \ENDFOR
	\end{algorithmic}
\end{algorithm}

Our proposed algorithm has the following features: 
\begin{enumerate}[label=(\roman*)]
    \item Financial inclusion and exploration. The stochastic policy provides approval probability $\pro(A|\hat{S})$ rather than a specific lending decision (approve or reject), which ensures sufficient diversity in samples.
    \item Account for missing data. The model contains parameter $\epsilon[j]$ to differentiate zero-value data and empty data.
    \item Fairness consideration for heterogeneous features. We consider three types of fairness: independence, outcome fairness, and statistical parity which may introduce discrimination against applicants of certain types. Our decision framework explicitly models fairness types 2 and 3 by using the parameter $\gamma$. We also consider their solutions for the fair allocation of resources for different groups of applicants.
    \item Each piece of accessible information contributes to the policy. Each feature will either positively or negatively contribute to the probability of approval.
    \item Convergence under mild assumptions. Specifically, we will show that the average utility approaches the optimum in the long run (see \cref{thm:convergence,crl:step_size}).
    \item Parameters learned directly from the algorithm. 
\end{enumerate}

\subsection{Optimality and convergence analysis}
\label{SS:Convergence_Analysis}

This section provides conditions that ensure the proposed algorithm converges to optimal parameters (\cref{thm:convergence}). Along the way, we explain the ideas behind the updating rules \eqref{eq:z_update} in \cref{lm:F equal grad}. we also find an appropriate choice for the step size $\alpha_{t}$ in \cref{crl:step_size}, based on the results of \cref{thm:convergence}.

\textit{Convergence condition.} \cref{alg:parameters_update} converges to the optimal parameters when the following conditions are fulfilled:
\begin{enumerate}[label=(\roman*)]
    \item $L(q)$ is concave over the domain of $z$. \label{condition 1}
    \item The set of admissible policy parameters\label{condition 2} $\mathcal{Z}$ satisfies
        \begin{align}
            \mathbb{E}\left[\|z_{t_1}-z_{t_2}\|^2\right] \leq D^2,\ \forall z_{t_1},z_{t_2} \in \mathcal{Z} \label{eq: assmptn2}.
        \end{align}
    \item The second moment of the stochastic gradient is bounded,\label{condition 3}
        \begin{align}
            \mathbb{E} \left[\|F_{z_t} \|^2 \mid z_t \right] \leq G^2.
        \end{align}
\end{enumerate}
The convergence conditions are formally stated in the theorem below.
\begin{theorem}\label{thm:convergence}
Assuming conditions \ref{condition 1}, \ref{condition 2}, and \ref{condition 3} hold, let $C(T)$ be defined by 
\begin{align} \label{eq: assmptn3}
    C(T) = \sum_{t=1}^T \alpha_t,
\end{align}
and $z^* = (\phi^*,\epsilon^*, \gamma^*)$ be defined in \eqref{eq:object_function}. Then, \cref{alg:parameters_update} gives the following performance:
\begin{align}\label{eq:convergence}
    \mathbb{E}\left[\sum_{t=1}^T\big(V(z^*)-V(z_t)\big)\right] \leq \cfrac{1}{2} \left(\cfrac{1}{\alpha_T} D^2 + G^2 C(T) \right).
\end{align}
\end{theorem}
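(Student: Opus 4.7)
The plan is to recognize the update \eqref{eq:z_update}–\eqref{eq:z_proj} as projected stochastic gradient ascent on $V$ and run the standard online-convex-optimization telescoping argument, with the one microfinance-specific ingredient being that $F_{z_t}$ is an unbiased estimate of $\nabla V(z_t)$ via a score-function identity.

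First I would invoke Lemma \ref{lm:F equal grad} to write $\mathbb{E}[F_{z_t}\mid z_t]=\nabla V(z_t)$; this is the REINFORCE identity applied to $\pi_z$, with the baseline $\bar R_t$ dropping out because $\bar R_t$ is conditionally independent of the current action and $\sum_a \partial_z \pi_z(\hat s,a)=0$. Condition 1 (concavity of $L$), combined with the fact that $q$ is affine in $z$ and that $V$ can be written as an expectation of a reward-weighted mixture of $L(q)$ and $1-L(q)$, will then give concavity of $V$, so that
\begin{equation*}
V(z^*)-V(z_t)\ \le\ \nabla V(z_t)^{\intercal}(z^*-z_t)\ =\ \mathbb{E}\bigl[F_{z_t}^{\intercal}(z^*-z_t)\bigm| z_t\bigr].
\end{equation*}

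Next I would apply the nonexpansiveness of $\proj_{\mathcal Z}$ (which is valid since $z^*\in\mathcal Z$ and $\mathcal Z$ is convex) to obtain
\begin{equation*}
\|z_{t+1}-z^*\|^2\ \le\ \|z_t-z^*\|^2 - 2\alpha_t\, F_{z_t}^{\intercal}(z^*-z_t) + \alpha_t^2\,\|F_{z_t}\|^2,
\end{equation*}
rearrange into
\begin{equation*}
\alpha_t\,F_{z_t}^{\intercal}(z^*-z_t)\ \le\ \tfrac12\bigl(\|z_t-z^*\|^2-\|z_{t+1}-z^*\|^2\bigr)+\tfrac{\alpha_t^2}{2}\|F_{z_t}\|^2,
\end{equation*}
take total expectation, and combine with the concavity inequality above. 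Summing over $t=1,\dots,T$ lets the squared-distance terms telescope to $\|z_1-z^*\|^2-\|z_{T+1}-z^*\|^2\le D^2$ by condition 2, while condition 3 bounds $\sum_t\alpha_t^2\mathbb{E}\|F_{z_t}\|^2$ by $G^2\sum_t\alpha_t^2$; with a constant step size $\alpha_t\equiv\alpha$ this last sum is $\alpha\,C(T)$ and, after dividing through by $\alpha$, reproduces the advertised bound $\tfrac12\bigl(D^2/\alpha_t+G^2 C(T)\bigr)$.

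The main obstacle I anticipate is not the mechanics of the telescoping argument but rather cleanly justifying concavity of $V$ from concavity of $L$: because the reward $R(\hat s,1,B)$ multiplying $L(q)$ in $V(z)$ can in principle take either sign, one needs a structural observation (e.g.\ rewriting $V$ as $\mathbb{E}[R(\hat s,0,B)]+\mathbb{E}[(R(\hat s,1,B)-R(\hat s,0,B))L(q)]$ and using a sign/monotonicity condition on $R$ implicit in the microfinance setup) to deduce that $V$ is concave in $z$. A secondary but routine subtlety is verifying the hypothesis of Lemma \ref{lm:F equal grad} for the specific baseline $\bar R_t$ used, namely that $\bar R_t$ is measurable with respect to the history strictly before period $t$ and hence conditionally independent of the period-$t$ action given $z_t$.
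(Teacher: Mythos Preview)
Your approach is the paper's: invoke Lemma~\ref{lm:F equal grad} for unbiasedness, use concavity of $V$ (condition~1 plus Proposition~\ref{prop: concavity}) to pass to the inner product, apply nonexpansiveness of the projection, and telescope. The one substantive gap is that you specialize to a \emph{constant} step size. The theorem is stated (and needed, for Corollary~\ref{crl:step_size} with $\alpha_t=D/(G\sqrt t)$) for time-varying $\alpha_t$; that is why $C(T)=\sum_t\alpha_t$ is introduced. With varying $\alpha_t$ your telescoping bounds $\sum_t \alpha_t\bigl(V(z^*)-V(z_t)\bigr)$, not the unweighted regret, and you cannot simply divide by a common $\alpha$ at the end. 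The paper instead divides the per-step inequality by $\alpha_t$ \emph{before} summing, obtaining
\[
\sum_{t=1}^T\frac{1}{\alpha_t}\,\mathbb{E}\bigl[\|z_t-z^*\|^2-\|z_{t+1}-z^*\|^2\bigr]+G^2\sum_{t=1}^T\alpha_t,
\]
and then applies Abel summation together with $\mathbb{E}\|z_t-z^*\|^2\le D^2$ and monotonicity of $1/\alpha_t$ to collapse the first sum to $D^2/\alpha_T$. That Abel/summation-by-parts step is what is missing from your plan.

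Your worry about the sign of the coefficient multiplying $L(q)$ when deducing concavity of $V$ is legitimate: the paper's Proposition~\ref{prop: concavity} writes $V(z)=\mathbb{E}\bigl[\mathbb{E}[R\mid A{=}1,\hat S]\,L(q)\bigr]$ and implicitly needs this inner conditional expectation to be nonnegative. This is not a flaw in your plan relative to the paper; in the proof of Theorem~\ref{thm:convergence} itself, concavity is simply invoked as a standing hypothesis.
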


\noindent \cref{thm:convergence} gives the following relation between step size and convergence speed.  
\begin{corollary} \label{crl:step_size}
When step size is chosen to be $\alpha_t = \cfrac{D}{G\sqrt{t}}$ , we have
\begin{align} \label{eq:step_size_thm}
    \mathbb{E}\left[\cfrac{1}{T}\sum_{t=1}^T \left(V(z^*)-V(z_t)\right)\right] \leq \cfrac{3DG}{2\sqrt{T}}.
\end{align}
\end{corollary}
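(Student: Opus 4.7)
The plan is to substitute the prescribed step size $\alpha_t = D/(G\sqrt{t})$ directly into the bound from Theorem \ref{thm:convergence}, bound the partial harmonic-type sum $C(T)$, and then divide through by $T$. This is a direct calculation with no new ideas beyond the theorem already established.

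First I would rewrite the Theorem \ref{thm:convergence} bound, interpreting the $\alpha_t$ on the right-hand side as $\alpha_T$ (the final step size), which is the standard form for projected stochastic gradient bounds with decreasing step sizes. Substituting $\alpha_T = D/(G\sqrt{T})$ gives
\begin{align}
    \frac{1}{\alpha_T} D^2 = D^2 \cdot \frac{G\sqrt{T}}{D} = DG\sqrt{T}.
\end{align}
Next I would handle $G^2 C(T) = G^2 \sum_{t=1}^T \frac{D}{G\sqrt{t}} = DG \sum_{t=1}^T t^{-1/2}$. The one non-trivial step is the standard integral-comparison estimate
\begin{align}
    \sum_{t=1}^T \frac{1}{\sqrt{t}} \leq 1 + \int_1^T \frac{dt}{\sqrt{t}} = 1 + 2(\sqrt{T}-1) \leq 2\sqrt{T},
\end{align}
so that $G^2 C(T) \leq 2DG\sqrt{T}$.

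Combining the two pieces inside the parenthesis of \eqref{eq:convergence} yields
\begin{align}
    \mathbb{E}\left[\sum_{t=1}^T\bigl(V(z^*)-V(z_t)\bigr)\right] \leq \frac{1}{2}\bigl(DG\sqrt{T} + 2DG\sqrt{T}\bigr) = \frac{3DG\sqrt{T}}{2}.
\end{align}
Dividing both sides by $T$ and noting $\sqrt{T}/T = 1/\sqrt{T}$ gives exactly \eqref{eq:step_size_thm}.

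The only step requiring any care is the bound on $\sum_{t=1}^T t^{-1/2}$ by $2\sqrt{T}$; everything else is algebraic substitution into the theorem. Since the step size choice is precisely the one that equalizes (up to a constant) the two terms $D^2/\alpha_T$ and $G^2 C(T)$, I expect no obstacles, and the proof can be written in a few lines.
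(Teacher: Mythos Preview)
Your proposal is correct and follows essentially the same route as the paper: substitute $\alpha_t = D/(G\sqrt t)$, use $\sum_{t=1}^T t^{-1/2}\le 2\sqrt T$, obtain $\tfrac{3}{2}DG\sqrt T$, and divide by $T$. The only cosmetic difference is that the paper starts from the intermediate telescoping form \eqref{eq:cnvrge_prf_4} rather than the final statement \eqref{eq:convergence}, which amounts to the same thing once the telescoping sum collapses to $1/\alpha_T$.
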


\cref{thm:convergence} relies on the property that
\begin{align}
\label{eq:gradient_property}
    \mathbb{E}\left[F_{z_t} \mid z_t\right] = \nabla_{z} V(z_t).
\end{align}
\begin{lemma}\label{lm:F equal grad}
When the total reward can be decomposed as the sum of rewards from individual applicants as in \eqref{eq:decomp_rewards}, the updating rules \eqref{eq:z_update} and \eqref{eq:z_proj} given \eqref{eq:gradient_long} - \eqref{eq:phi_partial} satisfy \eqref{eq:gradient_property}.
\end{lemma}
\begin{lemma}\label{lm:F no decomsable}
When the total reward cannot be decomposed as the sum of rewards from individual applicants, the updating rules \eqref{eq:z_update} and \eqref{eq:z_proj} given \eqref{eq:fair_gradient_long} - \eqref{eq:phi_partial_non_decomposed} will also satisfy \eqref{eq:gradient_property}.
\end{lemma}

Here, from an online learning perspective, one can interpret that the algorithm performs a form of stochastic gradient on reward function $V$ in the presence of missing information where, from a control perspective, $V(z)$ can be interpreted as a Lyapunov function. From \cref{thm:convergence}, optimization problem \eqref{eq:object_function} converges to the optimal parameters as $T\rightarrow \infty$ for the concave utility function. When the utility function of interest is not concave, further care might be needed. The \cref{prop: concavity} gives conditions under which the utility function is concave. 
\begin{proposition}
If the approval probability  $L(q)$ is a concave function of $q$, then the objective function $V(z)$ is concave in $z$. 
\label{prop: concavity}
\end{proposition}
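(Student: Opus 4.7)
The plan is to reduce concavity of $V(z)$ to the pointwise concavity of the integrand defining $V(z)$, then conclude by the fact that expectation preserves concavity. The chain of reasoning has three clean steps; the only delicate point is a sign issue on the utility differential, discussed at the end.

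\textbf{Step 1: $q$ is affine in $z$, so $L(q)$ is concave in $z$.} For each fixed realization $\hat{s}$, the quantity
\begin{equation*}
q(\hat{s}, z) \;=\; \frac{1}{n}\sum_{j \in U(\hat{s})} \phi[j]\,\hat{s}[j] + \epsilon[j]
\end{equation*}
is an affine function of $z = [\phi^\intercal, \epsilon^\intercal]^\intercal$ because only the coefficients of $\phi$ depend on $\hat{s}$. By hypothesis $L$ is concave, so by the standard composition rule $L\bigl(q(\hat{s}, z)\bigr)$ is concave in $z$ for every $\hat{s}$; equivalently $\pi_{z}(\hat{s},1)$ is concave in $z$ and $\pi_{z}(\hat{s},0) = 1 - L(q)$ is convex in $z$.

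\textbf{Step 2: conditional expansion.} Conditioning on $(\hat{S}, S)$ and letting $r_a(\hat{s}, s) := \mathbb{E}\bigl[R(\hat{s}, a, B) \,\big|\, \hat{S}=\hat{s}, S=s\bigr]$ for $a\in\{0,1\}$, I would write
\begin{equation*}
\mathbb{E}\bigl[R(\hat{S},A,B) \,\big|\, \hat{S}, S\bigr] \;=\; L(q)\,r_1(\hat{S},S) + \bigl(1-L(q)\bigr)\,r_0(\hat{S},S) \;=\; r_0(\hat{S},S) + L(q)\,\bigl(r_1(\hat{S},S) - r_0(\hat{S},S)\bigr).
\end{equation*}
Taking a further expectation over $(\hat{S},S)$ yields $V(z)$. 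Since concavity is preserved by expectation (i.e., if $h(z,\omega)$ is concave in $z$ for each $\omega$ and integrable, then $\mathbb{E}_\omega[h(z,\omega)]$ is concave in $z$), it suffices to show that the inner conditional expectation is concave in $z$ for almost every $(\hat{s}, s)$.

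\textbf{Step 3: concavity of the conditional integrand and the main obstacle.} The term $r_0(\hat{S},S)$ is constant in $z$. The term $L(q)(r_1 - r_0)$ is concave in $z$ provided $r_1(\hat{s},s) - r_0(\hat{s},s) \geq 0$, since then it is a nonnegative scalar multiple of the concave function $z \mapsto L(q(\hat{s},z))$ from Step 1. \emph{This sign condition is the crux of the argument and the main obstacle.} It is the natural condition that, averaged over the repayment outcome $B$, approving a given applicant yields at least as much utility as rejecting them; in the paper's setup it is compatible with the canonical choice $R(\hat{s},0,b) \equiv 0$ together with a utility $R(\hat{s},1,b)$ that bundles interest income, default loss, and nonnegative social-benefit terms so as to be nonnegative on the admissible feature domain. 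With this sign condition in place, the conditional integrand is concave in $z$ almost surely, and a routine Fubini/dominated convergence argument justifies passing the expectation through, giving concavity of $V(z)$ on $\mathcal{Z}$. If the sign condition fails on some set of positive probability, the proposition as stated needs a corresponding qualifier on $R$, which I would either fold into the hypothesis or handle by restricting attention to the subregion of $\hat{\mathcal{S}}$ where $r_1 \geq r_0$.
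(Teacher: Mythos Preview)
Your approach is essentially the paper's: expand $V(z)$ by conditioning on the action, observe that $q$ is affine in $z$ so $L(q)$ is concave in $z$, and pass concavity through the expectation. The paper's proof is slightly more direct in that it immediately invokes the model-specific fact $R(\hat s,0,b)\equiv 0$ to drop the $a=0$ term, obtaining $V(z)=\mathbb{E}\bigl[\,\mathbb{E}[R\mid A{=}1]\,L(q)\,\bigr]$, whereas you keep the general $r_0$ and only note $r_0=0$ as the canonical special case.

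Your added value is that you flag the sign condition $r_1-r_0\ge 0$ (equivalently, under $r_0=0$, $\mathbb{E}[R\mid A{=}1]\ge 0$) as the genuine hinge of the argument. The paper's proof simply asserts that $V(z)$ is ``a linear combination of $L(q)$'' and concludes concavity, without addressing the sign of the coefficients; with the paper's own utility $R(\hat s,1,B)=n\bigl((r+e)\mathbb{1}\{B{=}1\}+(-1+e)\mathbb{1}\{B{=}0\}\bigr)$, the coefficient $\mathbb{E}[R\mid A{=}1,\hat S]$ can be negative when the conditional default probability is high and $e$ is small, so strictly speaking the proposition needs the qualifier you identify. In short: same route, but you have been more careful about a hypothesis the paper leaves implicit.
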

\noindent The decision rule \eqref{eq:L(q)} in \cref{S:Problem_Statement} is an example of concave function in $q$ over the positive domain. \cref{prop: concavity} and condition \ref{condition 1} imply the expected rewards derived from concave decision rules such as \eqref{eq:L(q)} are also concave. The proofs of the above theorem, corollary, lemma, and proposition are derived in \cref{S:Proofs}.

\section{Experimental Settings}
\label{S:Experimental_Settings}
We considered the utility function (5) and set $r = 0.35$~\cite{kneiding2008variations}. The simulations were performed on a computer with an Intel Core i7-10875H processor with 32 GB of Random Access Memory (RAM).

\subsection{Generating application data pools}
\label{SS:generating data}

To investigate the empirical behaviors of the proposed methods, we generated synthetic and artificial data pools from more than 30 different distributions where each data pool contains $10^6$ data samples. The data generation process is described below:

\subsubsection{Realistic synthetic dataset from Kiva dataset} 
\label{SSS:kiva data generation}
We augmented a real microfinance loan dataset based on applicants' features and information from the Kiva platform~\cite{hartley2010kiva}\footnote{The original kiva dataset can be accessed through \url{https://stat.duke.edu/datasets/kiva-loans}}. The Kiva dataset contains the features and returns/default information of $N = 3,182$ approved loan applications in 2011. There are 41 features in the raw Kiva dataset, and we considered the features that do not conflict with our assumptions (see \cref{tab:kiva dataset} for the description of the features and our decision). To be specific, we used the following labels: "id", "description.languages", "activity", "sector", "location.country\_code", "location.country", "location.town", "location.geo.level", "partner\_id", "borrowers.gender", "borrowers.pictured", "terms.disbursal\_currency", "terms.loss\_liability.nonpayment", "terms.loss\_liability.currency\_exchange", "posted\_date", "funded\_date", "video.youtube\_id", "lat", and "lon". We used the "posted\_date" and "funded\_date" information to obtain the duration until the loan is funded from the date when the application is posted on the platform. After filtering the features, we first converted all the descriptive features into numerical values based on the default rate of applications. To ease the optimization process, we map the numerical values with a maximum possible value above $4$ to the $[0,4]$ range, utilizing the following process,
\begin{align}
    \label{eq:rescale kiva}
    \centering
    s_{\text{scaled}}[k] = \cfrac{4\cdot s_{\text{kiva}}[k]}{\max(s_{\text{kiva}}[k])},
\end{align}
after shifting features with negative values up, such that the lowest values equal to zero. Here, $\max(s[k])$ refers to the maximum value of the $k$ feature. We then employed the Synthetic Data Vault (SDV) algorithm~\cite{The2016Vault}, available at \url{https://sdv.dev/}, to generate synthetic data from the Kiva dataset.

We augmented the pre-processed data by creating a model using \texttt{TabularPreset()} function from the \texttt{sdv.lite} library. We chose to use the available \texttt{FAST\_ML} preset to fit the data. Then, we generated a pool of $10^6$ synthetic data by inputting all $3,182$ applications data into the \texttt{fit()} function of the model from \texttt{TabularPreset()} and generated the synthetics data using the \texttt{sample()} function of the model. To simulate situations with various different paid rates, we also generated pools of synthetics data with 10\%, 20\%, 30\%, 40\%, and 50\% defaulted rates. Here, we first differentiate the data into two categories, defaulted and paid. We then generated pools of $10^6$ defaulted and paid datasets each by inputting the defaulted and paid data into the \texttt{fit()} function of the model, respectively. The pools of data with desired defaulted rate are then created by randomly drawing applications from both defaulted and paid synthetics data separately.

\subsubsection{Artificial distributions} 
\label{SSS:artificial data}
We generated artificial data pools on $30$ different distribution types. We considered the feature vector $S$ of $100$-features information from each loan application. Each feature contains a non-negative number that represents the applicant's information. Those features could go from personal information, including age range, income level, education level, language skill, etc., to household information such as household type, number of bedrooms, internet accessibility, etc. We considered $26$ distribution types with bounded values of feature information in the range of $[0,4]$ and $4$ distribution types with unbounded values of feature information. From the $26$ bounded distributions, $8$ of them have negative weight for the features. The specific form of the feature information distributions $\pro\left(S\right)_l$ and the return probability $\pro(B=1 \mid \hat{S},M)$ can be seen in \cref{tab:distribution list,tab:distribution list flip,tab:distribution list unbounded}.

Each application $i$ comes with group size $m$. We assume that the group size is uniformly distributed among $\mathcal{G} = \{1,2,\cdots, 100\}$, 
\begin{align}
    \pro(M=m) = \begin{cases} 
        1/|\mathcal{G}| & \text{if } m \in \mathcal{G} \\
        0 & \text{otherwise}.
    \end{cases}
\end{align}
The liability for the loan repayment is imposed on the group by obligating the members of the group to cover the other members who cannot return the loan and its interest, $1+r$. At the end of the lending period, each applicant $j$ in the group $i$ holds $\theta_{i,j}$ unit of money, which is the principal plus gains or loss. We assume that $\theta_{i,j}$ is an \textit{i.i.d.} Gaussian random variable, \ie,
\begin{align}
    \theta_{i,j} \overset{\text{i.i.d.}}{\sim} \mathcal{N}(\mu(\hat{s}_i),\sigma^2).
\end{align}
where $\mathcal{N}(\mu,\sigma^2)$ refers to a Gaussian distribution with mean $\mu$ and standard deviation $\sigma$. Thus, at the end of the lending period, the group $i$ holds $\sum_{j=1}^{m_i} \max\{\theta_{i,j},0\}$ units of money in total. Here, we assumed that the members with $\theta_{i,j} < 0$, who hold debts from elsewhere, do not have the capability to return any money but will not transfer the other debts to the group. Because group $i$ must return $m_i\cdot(1+r)$, the group will default if 
\begin{align}
    \sum_{j=1}^{m_i}\max (\theta_{i,j},0) < m_i\cdot(1+r).
\end{align}
Accordingly, the return probability for group application with $m_i$ members is given by
\begin{align}
    \pro(B=1\mid\hat{S} = \hat{s}_i, M = m_i) = \pro\left(\sum_{j=1}^{m_i}\max (\theta_{i,j},0)\geq m_i\cdot(1+r)\right).
    \label{eq:basic group}
\end{align}
We considered the repayment probability in \cref{tab:distribution list,tab:distribution list flip,tab:distribution list unbounded}, where we modeled $\mu(\hat{s}) = \Delta + \pro(B=1 \mid \hat{S} = \hat{s}, M = m)$ and $\sigma = 0.5$ given $\Delta = 0.5+r \text{ for } r = 0.35$, that helped to center $\mu$ around a reasonable repaid amount.

\subsection{Simulation Setting}
\label{SS:simulation setting}

We performed simulations for lending periods of $t=1$ to $t=500$. At each lending period, we choose $N_{t}=10$ applicants, taken from the pool of applicants generated using methods described in \cref{SS:generating data}. Here, we have the observable information $\hat{s}_{i,t}$.

\subsubsection{Algorithms to compare}
\label{SS:Algorithms to Compare}
We compared our proposed algorithm against several existing algorithms. As the base ideal performance, the best scenario to decide on loan approval is when perfect knowledge of repayment probability is available. Here, for the perfect decision making, \eqref{eq:utility} can be rewritten as
\begin{align}\label{eq:utility_perfect}
    &\ R(\hat S, M, A, B) \nonumber \\
    = &\ \begin{cases}
    (r + e)\cdot m; & \pro(B=1 \mid A=1, S, M=m), \\
    (-1 + e)\cdot m; & \pro(B=0 \mid A=1, S, M=m), \\
    0 ; & A = 0, B = 0,
    \end{cases} 
\end{align}
for $\pro(B=0 \mid A=1, S, M) = 1 - \pro(B=1 \mid A=1, S, M)$. The expectation of the utility given the decision to approve a given application can then be expressed as
\begin{align}
    &\ \mathbb{E}\big[R(\hat{s}_{i,t},m_{i,t},a_{i,t},b_{i,t}) \mid a_{i,t}=1\big] \nonumber \\
    = &\ m_{i,t} \Big((r+1)\pro(b_{i,t}=1 \mid a_{i,t}=1, s_{i,t},m_{i,t}) + e - 1 \Big).
\end{align}
The algorithm will approve an application when
\begin{align}
    \label{eq:perfect_approval_condition}
    & \mathbb{E}\big[R(\hat{s}_{i,t},m_{i,t},a_{i,t},b_{i,t}) \mid a_{i,t}=1\big] && \geq \ \ 0 &\\
    \Longrightarrow \ \ & \pro(b_{i,t}=1 \mid a_{i,t}=1, s_{i,t},m_{i,t}) && \geq \ \ \cfrac{1-e}{1+r}.
\end{align}
Thus, in this perfect information scenario, the decision rule can be written as
\begin{align}\label{eq:decision_perfect}
    A = \begin{cases}
     1 & \text{if }\pro(B=1 \mid A=1, S, M) \geq \cfrac{1-e}{1+r}  \\  
     0 & \text{otherwise}.
    \end{cases}
\end{align}
    
\textbf{Credit score based method.} In real life scenario, we cannot access the actual value for the repayment probability $\pro(B=1 \mid A=1, S)$. Here, the credit score based method is commonly employed to predict the repayment probability. In our study, we implemented credit score based method by finding the best fit model for the data from the previous lending period to predict the repayment probability. Specifically, we fit the data into the first-order Gaussian model,
\begin{align}
    \label{eq:gauss1}
    \hat{\pro}(B=1 \mid A=1, \hat{S}, M) & = a_g \exp\left(-\left(\cfrac{q_g-b_g}{c_g}\right)^2\right), \\
    \intertext{where}
    q_g & = \cfrac{1}{n} \sum_{j\in U(\hat{s})} \hat{s} [j].
\end{align}
Here, $\hat{\pro}(B=1 \mid A=1, \hat{S})$ is the predicted repayment probability parameterized by constants $a_g,\ b_g, \text{ and } c_g$. The constants $a_g,\ b_g, \text{ and } c_g$ are obtained by minimizing the non-linear least square of the difference between the predicted and the actual repayment probability, \ie,
\begin{align}
\centering
    & \{a_g,b_g,c_g\} && \nonumber \\
    = & \argmin_{\{a_g,b_g,c_g\}} \sum && \big\| \hat{\pro}(B=1 \mid A=1, \hat{S}, M) \nonumber \\
    & && - \pro(B=1 \mid A=1,S, M) \big\|_2. 
\end{align}
We can then rewrite the decision rule \eqref{eq:decision_perfect} as
\begin{align}\label{eq:utility_predicted_probability}
    A = \begin{cases}
     1 & \text{if } \hat \pro(B=1 \mid A=1,\hat{S},M) \geq \cfrac{1-e}{1+r}  \\  
     0 & \text{otherwise}.
    \end{cases}
\end{align}
The procedure is executed by employing the \texttt{fit()}~\cite{TheMathWorks2019MATLAB2019b} function in MATLAB. For this approach, we assume that the true value for the repayment probability is available at the end of every lending period. The MFI approves all applicant during the first lending period and then store the features data together with the actual repayment probability at the end of the lending period to be fitted to the model for the next lending period. Then, at the end of each lending period after the first, we revise the model by adding more data points using the current data. We repeat the process until the tenth lending period as we observed that there isn't a significant difference in the performance after the tenth lending period.
    
\textbf{Other algorithms compared.} We also compared the performance of the proposed algorithm against perceptron, random forest, support machine vector (SVM), and logistic regression algorithms. In our study, we compared the proposed algorithm against a single-layer perceptron~\cite{rosenblatt1957perceptron} as a classic learning algorithm for binary classification (\ie, approve or decline applications). We employed the MATLAB~\cite{TheMathWorks2019MATLAB2019b} built-in functions for the random forest, SVM, and logistic regression decision making. To be specific, we utilized \texttt{fitctree()} function as the decision tree classification, and \texttt{fitclinear()} function with \texttt{\upquote{svm}} and \texttt{\upquote{logistic}} as the \texttt{\upquote{Learner}} option for the SVM and logistic regression implementation, respectively. We let both functions optimize its hyper-parameters automatically and all other parameters utilize its default values. The decision models were trained at every loan period using the available feature information $\hat{S}$ and the boolean return/default data collected from previous lending periods. Here, at the first lending period, the MFI will approve all applicants and then store the data at the end of the lending period to be used as the training data for the next lending period. Then, at the end of the next lending period, we add the new data to the training data. To save computational costs from training the model, we perform the training only until the tenth lending period.

\subsubsection{Multi-optimization for improved learning speed}
We initiated the optimization scheme with $10$ initial random points and kept five of the best result for the next lending period. To explore the landscape, we then chose additional five random points for the next iteration and performed the optimization iteration for these $10$ points. To reduce the computational cost of the multi-optimization scheme, we only repeated this procedure for the first $50$ lending period and then only performed the gradient optimization for the best result afterward.

\subsubsection{Step size, learning speed, and convergence}
\label{SSS:Step_size}
The choice of step size $\alpha_t$ is crucial for the convergence and learning speed of the proposed algorithm. We studied the effect of different choices of $\alpha_t$ in our proposed algorithm by varying the constant value of $\frac{D}{G} \in \{0.01, 0.05, 0.1, 0.5, 1, 5, 10, 50, 100, 500, 1000, 5000$, $10000, 50000\}$ from \cref{crl:step_size}. A small step size will converge to non-optimum results, while a large step size will produce an overshoot and hinder convergence as can be seen in \cref{F:step size comparison}. \cref{crl:step_size} suggest decreasing the step size with speed $O(1/\sqrt{t})$ which prevents the algorithm to overshoot when the optimum policy has been found.

\begin{figure}[htbp]
	\centering
	\includegraphics[width=0.45\columnwidth]{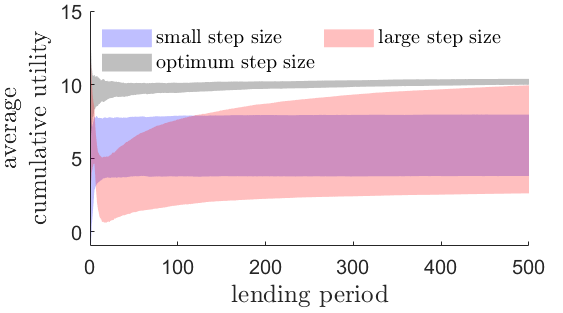}
	\label{F:step size comparison 50 multi}
	\caption{Convergence rate comparison of using different step sizes. For this comparison we consider case A as the form of $L(q)$ with repayment probability distribution type 5 without missing information. We run the simulation 50 times and the shaded regions show the area within one standard deviation of the average cumulative utilities. For this particular case, the optimal step size is achieved when $\alpha_{t} = \cfrac{10}{\sqrt{t}}$.}
	\label{F:step size comparison}
\end{figure}

\section{Performance Comparison}
\label{S:Performance_Comparison}
\begin{figure}[htbp]
    \centering
    \begin{minipage}{0.495\textwidth}
        \subfigure[robustness against missing data]{\includegraphics[width=0.95\textwidth]{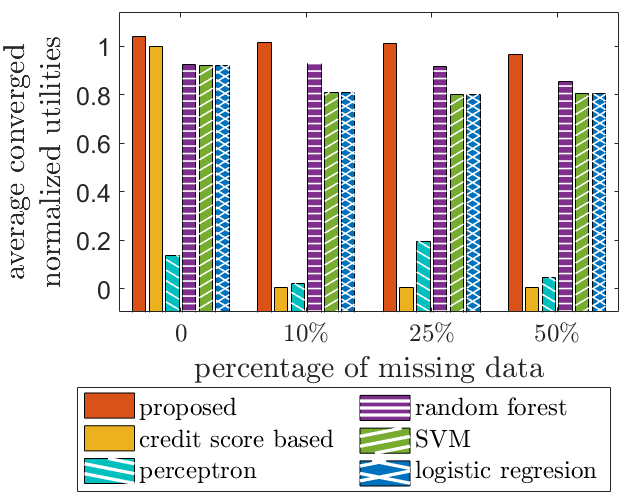}\label{F:empty entry comparison}}\hfill%
    \end{minipage}
    \begin{minipage}{0.495\textwidth}
        \subfigure[fairness]{\includegraphics[width=0.8\textwidth]{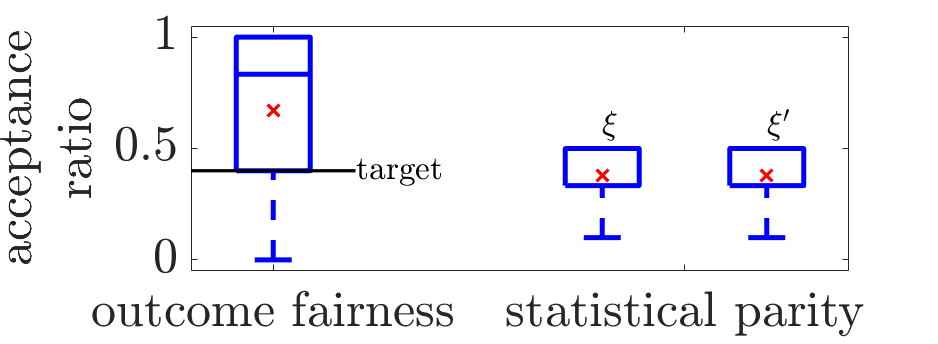}\label{F:fairness}}\hfill%
        \subfigure[default risk vs financial inclusion]{\includegraphics[width=0.975\textwidth]{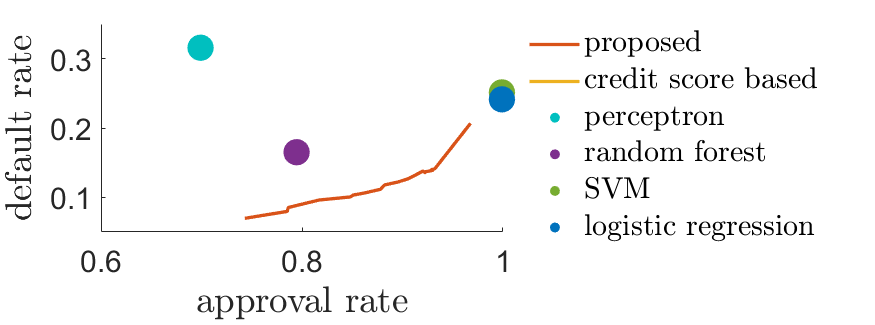}\label{F:default vs acceptance}}%
    \end{minipage}%
    
    \subfigure[artificial data]{\includegraphics[width=0.29\textwidth]{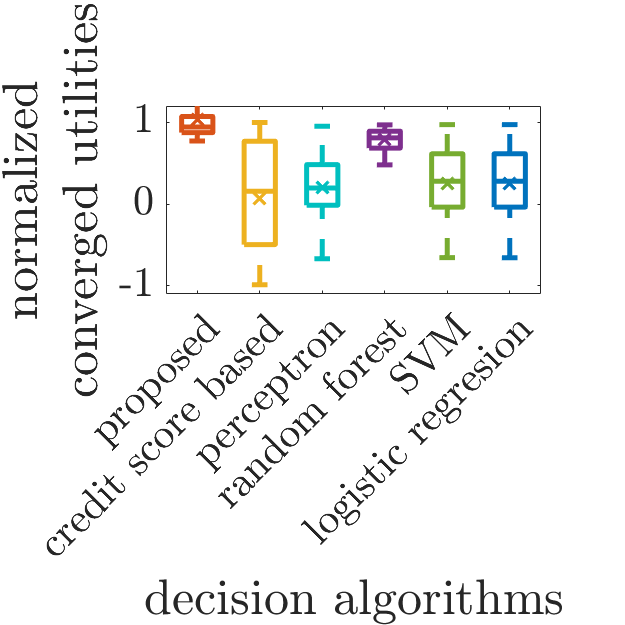}\label{F:utilities_comparison_10empty_boxplot_group}}
    \subfigure[data from kiva.org]{\includegraphics[width=0.29\textwidth]{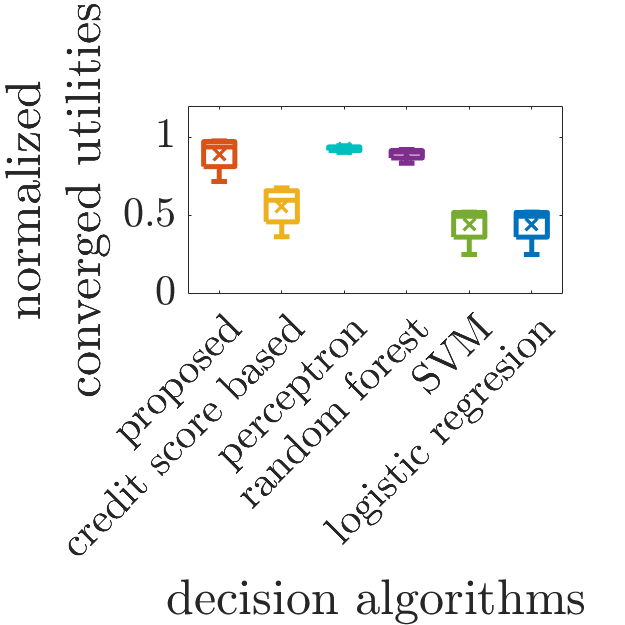}\label{F:utilities_comparison_kiva_boxplot}}
    \subfigure[adaptation]{\includegraphics[width=0.4\textwidth]{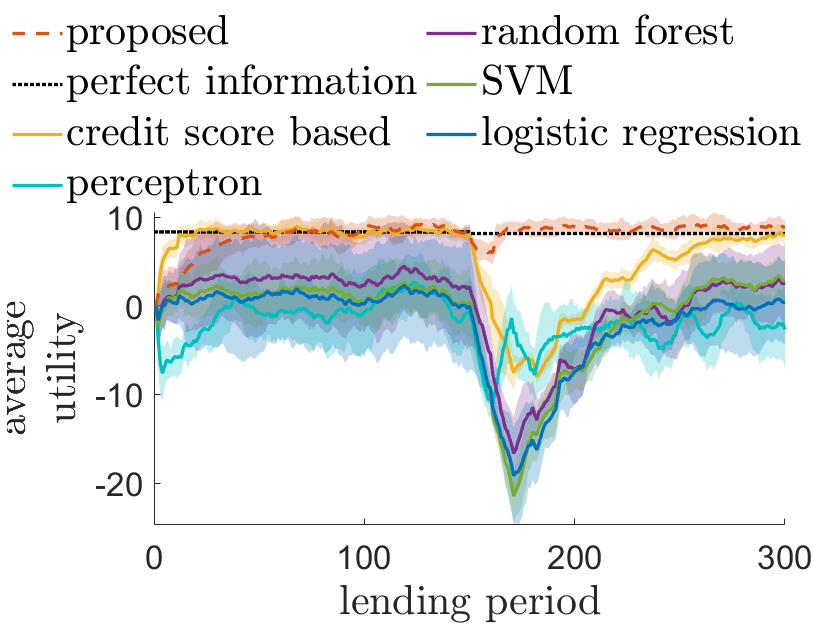}\label{F:changing distribution}}
    
    \caption{(a) Average converged cumulative normalized utilities of 50 simulations are shown for varying rates of missing data. The utility values are normalized by the utility when the exact return/default probability is known. (b) Acceptance ratio statistics of the discriminative features to show fairness types 1 and 2. (c) Tradeoffs between default probability (risk) vs approval rate (financial inclusion). The statistics of the steady state utility for (d) diverse applicant distributions and (e) dataset from kiva. (f) The average utility when the distribution changes at the 150th lending period. In subfigures (b), (d), and (e) the mean values are shown by '$\times$', and the first, second, and third quartiles, as well as the maximum and minimum values, are shown as the boxplot. Except for subfigure (a) and the result from the kiva dataset, the results were generated for $10\%$ of missing information. $20\%$ missing information was considered for the Kiva dataset. Except for (c), we set $e=0$. We consider group liability for all the subfigures.}
    \label{F:contribution}
\end{figure}

\subsection{Robustness against missing data} 
In this study, we set $e = 0$, and performed the simulation $50$ times for each case. We simulated missing information by choosing a missing probability $\pro_{missing}$. Here, the feature information $s_{i,t}$ still had all information, but the corresponding observed entry in $\hat{s}_{i,t}$ was empty with probability $\pro_{missing}$. We varied the missing probability equal to $0$, $10\%$, $25\%$, and $50\%$. \cref{F:empty entry comparison} shows the performance degradation of the algorithms to a varying level of missing feature information. While the performance of all algorithms decreased with the ratio of missing entries, the proposed algorithm degraded more gracefully than others. This is because, unlike the other algorithms, the proposed algorithm is designed to differentiate the empty entry so that the missing information does not affect the decision policy as much (see \cref{S:Problem_Statement}), resulting in an approach that is robust against missing information.

\subsection{Ensuring fairness among different groups} 
To investigate the fairness of our algorithm, we introduced a supposedly discriminative feature to the feature vector. This discriminative feature had three discrete values \{0, 2, 4\} and did not affect the actual repayment probability. The algorithm then ran for some target ratio of the discriminative feature with a value of 0 to investigate the outcome fairness (type 1). To investigate statistical parity (type 2) fairness, we considered the discriminative feature with values 0 and 4. \Cref{F:fairness} shows the box plot for the acceptance rate for the minority group over the course of the simulation. For type 1 fairness, we can see that by setting the target ratio $\Pi(\xi) = 0.4$, the mean of the acceptance rate is above $0.4$. For type 2 fairness, the acceptance rate gap for both groups stays relatively small throughout.

\subsection{Improved tradeoffs between default risk vs. financial inclusion} 
There is a tradeoff between default risk vs. financial inclusion because a higher approval rate comes at the expense of higher default risk. To investigate this property, we varied the loan subsidy level $e$ from $0$ to $1$ with a $0.05$ interval. We then recorded the final approval and default rates of each value of $e$. \cref{F:default vs acceptance} shows such tradeoffs for the algorithms tested. The proposed algorithm allows us to systematically tradeoff default risk and financial inclusion by varying the loan subsidy level $e$. The perceptron, random forest, SVM, and logistic regression algorithms do not have the flexibility to do so because they cannot be optimized for a utility function. The credit score based method is not visible in the current plot range due to large performance degradation in the presence of missing data ($10\%$ missing information). The proposed algorithm achieves a reduced default rate for an identical approval rate.

\subsection{Ability to deal with diverse microfinance distributions} 
We examined the performance of the compared algorithms using the dataset generated from $30$ different distributions and the Kiva dataset as mentioned previously. We ran the simulation $50$ times and took the average of the converged utility values $V(z^*)$ of each distribution type. We then normalized the utilities as follows, 
\begin{align}
    \{\tilde{\mathcal{V}}(z^*)\} = 2\left(\cfrac{\{\mathcal{V}(z^*)\}-\min \{\mathcal{V}(z^*)\}}{V_{perfect}}\right)-1,
\end{align}
where $V_{perfect}$ is the utility when perfect knowledge of repayment probability is available, $\{\mathcal{V}(z^*)\}$ is a set of converged utilities of all compared algorithms, and $\{\tilde{\mathcal{V}}(z^*)\}$ is a set of normalized converged utilities of all compared algorithms. Here, we considered the cases with $10\%$ and $20\%$ missing information on the artificial and synthetic datasets, respectively. \cref{F:utilities_comparison_10empty_boxplot_group,F:utilities_comparison_kiva_boxplot} capture the statistic of the normalized steady-state utilities. On average, our algorithm converges to a higher utility than the other algorithms, suggesting that our algorithm can learn a more optimal policy even with incomplete information, achieving our first design goal. This result also shows that the proposed approach can better seize and learn the influence caused by different group sizes.

\subsection{Adaptation to changes} 
We studied the adaptability of the proposed algorithm to the dynamics in social and economic conditions by changing the distribution of the dataset in the middle of the simulation. To highlight the behavior of the algorithms in adapting to the changes, we chose the distributions that would have similar repayment probability distributions but need to have opposite feature weights. We retrained the credit score based method, random forest, SVM, and logistic regression algorithms at the end of each lending period. \Cref{F:changing distribution} shows the performance when application distribution changes at the 150\textit{th} lending period, without the knowledge of if and when the distribution has changed. As can be seen, the proposed algorithm recovers faster than the other algorithms as it uses immediate feedback from the latest samples to perform quick adaptation. In contrast, the other algorithms, which are primarily designed for offline use, end up putting the same emphasis on data that contains the samples from both before and after the changes.

\section{Conclusion}
\label{S:Conclusion}
In this work, we presented a novel control-theoretic model for microfinance lending strategy. The model solves three main challenges in microfinance: a) the insufficient past data problem, b) the missing applicants' information problem, and c) the group liability structure. Extensive empirical results from numerous artificial and synthetic datasets showed several notable performances upon benchmark models, such as robustness against missing data, adaption to changes, and default risk vs. financial inclusion tradeoff. In addition, we proposed several penalty methods for different fairness scenarios to avoid discrimination in decisions. We hope our model will be useful for achieving the United Nation's Sustainable Development Goals and could help more people in underdeveloped regions to have better life.

\bibliographystyle{ieeetr}
\bibliography{Sections/references}

\begin{thebibliography}{10}

\bibitem{yunus2007banker}
M.~Yunus, {\em Banker to the poor: Micro-lending and the battle against world
  poverty}.
\newblock PublicAffairs, 2007.

\bibitem{Econ-MF-Edition2}
B.~Armendáriz and J.~Morduch, {\em {The Economics of Microfinance}}.
\newblock No.~0262014106 in MIT Press Books, The MIT Press, 2nd~ed., December
  2010.

\bibitem{Kamanza2014CausesOD}
R.~M. Kamanza, ``Causes of default on micro - credit among women micro -
  entrepreneurs in kenya. a case study of women enterprise development fund
  (wedf) msambweni constituency,'' {\em IOSR Journal of Economics and Finance},
  vol.~3, pp.~32--47, 2014.

\bibitem{schreiner2001seven}
M.~Schreiner, ``Seven aspects of loan size,'' {\em Journal of Microfinance/ESR
  Review}, vol.~3, no.~2, p.~3, 2001.

\bibitem{mersland2010microfinance}
R.~Mersland and R.~{\O}. Str{\o}m, ``Microfinance mission drift?,'' {\em World
  development}, vol.~38, no.~1, pp.~28--36, 2010.

\bibitem{nawai2012factors}
N.~Nawai and M.~N.~M. Shariff, ``Factors affecting repayment performance in
  microfinance programs in malaysia,'' {\em Procedia-Social and Behavioral
  Sciences}, vol.~62, pp.~806--811, 2012.

\bibitem{addae2014causes}
A.~Addae-Korankye, ``Causes and control of loan default/delinquency in
  microfinance institutions in ghana,'' {\em American International Journal of
  Contemporary Research}, vol.~4, no.~12, pp.~36--45, 2014.

\bibitem{ala2016classifiers}
M.~Ala'raj and M.~F. Abbod, ``Classifiers consensus system approach for credit
  scoring,'' {\em Knowledge-Based Systems}, vol.~104, pp.~89--105, 2016.

\bibitem{shi2019credit}
B.~Shi, X.~Zhao, B.~Wu, and Y.~Dong, ``Credit rating and microfinance lending
  decisions based on loss given default (lgd),'' {\em Finance Research
  Letters}, vol.~30, pp.~124--129, 2019.

\bibitem{ampountolas2021machine}
A.~Ampountolas, T.~Nyarko~Nde, P.~Date, and C.~Constantinescu, ``A machine
  learning approach for micro-credit scoring,'' {\em Risks}, vol.~9, no.~3,
  2021.

\bibitem{lehner2009group}
M.~Lehner, ``Group lending versus individual lending in microfinance,'' tech.
  rep., SFB/TR 15 Discussion Paper, 2009.

\bibitem{kodongo2013individual}
O.~Kodongo and L.~G. Kendi, ``Individual lending versus group lending: An
  evaluation with kenya's microfinance data,'' {\em Review of Development
  Finance}, vol.~3, no.~2, pp.~99--108, 2013.

\bibitem{haldar2016group}
A.~Haldar and J.~E. Stiglitz, ``Group lending, joint liability, and social
  capital: Insights from the indian microfinance crisis,'' {\em Politics \&
  Society}, vol.~44, no.~4, pp.~459--497, 2016.

\bibitem{nandhi2012incidence}
M.~A. Nandhi, ``Incidence of loan default in group lending programme,'' {\em
  CMFWorking Paper}, 2012.

\bibitem{allen2016optimal}
T.~Allen, ``Optimal (partial) group liability in microfinance lending,'' {\em
  Journal of Development Economics}, vol.~121, pp.~201--216, 2016.

\bibitem{zliobaite2015survey}
I.~Zliobaite, ``A survey on measuring indirect discrimination in machine
  learning,'' {\em arXiv preprint arXiv:1511.00148}, 2015.

\bibitem{corbett2018measure}
S.~Corbett-Davies and S.~Goel, ``The measure and mismeasure of fairness: A
  critical review of fair machine learning,'' {\em arXiv preprint
  arXiv:1808.00023}, 2018.

\bibitem{ledgerwood2013new}
J.~Ledgerwood, J.~Earne, and C.~Nelson, {\em The new microfinance handbook: A
  financial market system perspective}.
\newblock World Bank Publications, 2013.

\bibitem{yimga2016impact}
J.~Yimga, ``The impact of high microfinance growth on loan portfolio,'' {\em
  Journal of International Development}, vol.~28, no.~5, pp.~697--714, 2016.

\bibitem{huo2017risk}
X.~Huo and F.~Fu, ``Risk-aware multi-armed bandit problem with application to
  portfolio selection,'' {\em Royal Society open science}, vol.~4, no.~11,
  p.~171377, 2017.

\bibitem{chikalipah2018credit}
S.~Chikalipah, ``Credit risk in microfinance industry: Evidence from
  sub-saharan africa,'' {\em Review of Development Finance}, vol.~8, no.~1,
  pp.~38--48, 2018.

\bibitem{The17Goals}
``{The 17 Goals | Sustainable Development},'' 2015.

\bibitem{klaff2004new}
L.~G. Klaff, ``New internal hiring systems reduce cost and boost morale,'' {\em
  Workforce}, 2004.

\bibitem{puro2010borrower}
L.~Puro, J.~E. Teich, H.~Wallenius, and J.~Wallenius, ``Borrower decision aid
  for people-to-people lending,'' {\em Decision Support Systems}, vol.~49,
  no.~1, pp.~52--60, 2010.

\bibitem{baesens2003benchmarking}
B.~Baesens, T.~Van~Gestel, S.~Viaene, M.~Stepanova, J.~Suykens, and
  J.~Vanthienen, ``Benchmarking state-of-the-art classification algorithms for
  credit scoring,'' {\em Journal of the operational research society}, vol.~54,
  no.~6, pp.~627--635, 2003.

\bibitem{vaidya2017predictive}
A.~Vaidya, ``Predictive and probabilistic approach using logistic regression:
  application to prediction of loan approval,'' in {\em 2017 8th International
  Conference on Computing, Communication and Networking Technologies (ICCCNT)},
  pp.~1--6, IEEE, 2017.

\bibitem{abdou2008neural}
H.~Abdou, J.~Pointon, and A.~El-Masry, ``Neural nets versus conventional
  techniques in credit scoring in egyptian banking,'' {\em Expert Systems with
  Applications}, vol.~35, no.~3, pp.~1275--1292, 2008.

\bibitem{chen2018loan}
Y.-Q. Chen, J.~Zhang, and W.~W. Ng, ``Loan default prediction using diversified
  sensitivity undersampling,'' in {\em 2018 International Conference on Machine
  Learning and Cybernetics (ICMLC)}, vol.~1, pp.~240--245, IEEE, 2018.

\bibitem{condori2021rural}
H.~I. Condori-Alejo, M.~R. Aceituno-Rojo, and G.~S. Alzamora, ``Rural micro
  credit assessment using machine learning in a peruvian microfinance
  institution,'' {\em Procedia Computer Science}, vol.~187, pp.~408--413, 2021.

\bibitem{zhao2015investigation}
Z.~Zhao, S.~Xu, B.~H. Kang, M.~M.~J. Kabir, Y.~Liu, and R.~Wasinger,
  ``Investigation and improvement of multi-layer perceptron neural networks for
  credit scoring,'' {\em Expert Systems with Applications}, vol.~42, no.~7,
  pp.~3508--3516, 2015.

\bibitem{correa2011genetic}
A.~Correa, A.~Gonzalez, and C.~Ladino, ``Genetic algorithm optimization for
  selecting the best architecture of a multi-layer perceptron neural network: a
  credit scoring case,'' in {\em SAS Global Forum}, Citeseer, 2011.

\bibitem{wang2012two}
G.~Wang, J.~Ma, L.~Huang, and K.~Xu, ``Two credit scoring models based on dual
  strategy ensemble trees,'' {\em Knowledge-Based Systems}, vol.~26,
  pp.~61--68, 2012.

\bibitem{van2016novel}
H.~Van~Sang, N.~H. Nam, and N.~D. Nhan, ``A novel credit scoring prediction
  model based on feature selection approach and parallel random forest,'' {\em
  Indian Journal of Science and Technology}, vol.~9, no.~20, pp.~1--6, 2016.

\bibitem{huang2007credit}
C.-L. Huang, M.-C. Chen, and C.-J. Wang, ``Credit scoring with a data mining
  approach based on support vector machines,'' {\em Expert systems with
  applications}, vol.~33, no.~4, pp.~847--856, 2007.

\bibitem{chen2010combination}
F.-L. Chen and F.-C. Li, ``Combination of feature selection approaches with svm
  in credit scoring,'' {\em Expert systems with applications}, vol.~37, no.~7,
  pp.~4902--4909, 2010.

\bibitem{bolton2010logistic}
C.~Bolton {\em et~al.}, {\em Logistic regression and its application in credit
  scoring}.
\newblock PhD thesis, University of Pretoria, 2010.

\bibitem{sohn2016technology}
S.~Y. Sohn, D.~H. Kim, and J.~H. Yoon, ``Technology credit scoring model with
  fuzzy logistic regression,'' {\em Applied Soft Computing}, vol.~43,
  pp.~150--158, 2016.

\bibitem{d2020fairness}
A.~D'Amour, H.~Srinivasan, J.~Atwood, P.~Baljekar, D.~Sculley, and Y.~Halpern,
  ``Fairness is not static: deeper understanding of long term fairness via
  simulation studies,'' in {\em Proceedings of the 2020 Conference on Fairness,
  Accountability, and Transparency}, pp.~525--534, 2020.

\bibitem{agarwal2021trade}
S.~Agarwal, ``Trade-offs between fairness and interpretability in machine
  learning,'' in {\em IJCAI 2021 Workshop on AI for Social Good}, 2021.

\bibitem{bantilan2018themis}
N.~Bantilan, ``Themis-ml: A fairness-aware machine learning interface for
  end-to-end discrimination discovery and mitigation,'' {\em Journal of
  Technology in Human Services}, vol.~36, no.~1, pp.~15--30, 2018.

\bibitem{burrell2016machine}
J.~Burrell, ``How the machine ‘thinks’: Understanding opacity in machine
  learning algorithms,'' {\em Big data \& society}, vol.~3, no.~1,
  p.~2053951715622512, 2016.

\bibitem{hall2021united}
P.~Hall, B.~Cox, S.~Dickerson, A.~Ravi~Kannan, R.~Kulkarni, and N.~Schmidt, ``A
  united states fair lending perspective on machine learning,'' {\em Frontiers
  in Artificial Intelligence}, vol.~4, p.~695301, 2021.

\bibitem{chen2019fairness}
J.~Chen, N.~Kallus, X.~Mao, G.~Svacha, and M.~Udell, ``Fairness under
  unawareness: Assessing disparity when protected class is unobserved,'' in
  {\em Proceedings of the conference on fairness, accountability, and
  transparency}, pp.~339--348, 2019.

\bibitem{barbierato2022methodology}
E.~Barbierato, M.~L.~D. Vedova, D.~Tessera, D.~Toti, and N.~Vanoli, ``A
  methodology for controlling bias and fairness in synthetic data generation,''
  {\em Applied Sciences}, vol.~12, no.~9, p.~4619, 2022.

\bibitem{zemel2013learning}
R.~Zemel, Y.~Wu, K.~Swersky, T.~Pitassi, and C.~Dwork, ``Learning fair
  representations,'' in {\em International conference on machine learning},
  pp.~325--333, PMLR, 2013.

\bibitem{lee2014fairness}
E.~L. Lee, J.-K. Lou, W.-M. Chen, Y.-C. Chen, S.-D. Lin, Y.-S. Chiang, and
  K.-T. Chen, ``Fairness-aware loan recommendation for microfinance services,''
  in {\em Proceedings of the 2014 international conference on social
  computing}, pp.~1--4, 2014.

\bibitem{kim2019multiaccuracy}
M.~P. Kim, A.~Ghorbani, and J.~Zou, ``Multiaccuracy: Black-box post-processing
  for fairness in classification,'' in {\em Proceedings of the 2019 AAAI/ACM
  Conference on AI, Ethics, and Society}, pp.~247--254, 2019.

\bibitem{robbins1951stochastic}
H.~Robbins and S.~Monro, ``A stochastic approximation method,'' {\em The annals
  of mathematical statistics}, pp.~400--407, 1951.

\bibitem{duchi2011adaptive}
J.~Duchi, E.~Hazan, and Y.~Singer, ``Adaptive subgradient methods for online
  learning and stochastic optimization.,'' {\em Journal of machine learning
  research}, vol.~12, no.~7, 2011.

\bibitem{zeiler2012adadelta}
M.~D. Zeiler, ``Adadelta: an adaptive learning rate method,'' {\em arXiv
  preprint arXiv:1212.5701}, 2012.

\bibitem{mukkamala2017variants}
M.~C. Mukkamala and M.~Hein, ``Variants of rmsprop and adagrad with logarithmic
  regret bounds,'' in {\em International Conference on Machine Learning},
  pp.~2545--2553, PMLR, 2017.

\bibitem{kingma2014adam}
D.~P. Kingma and J.~Ba, ``Adam: A method for stochastic optimization,'' {\em
  arXiv preprint arXiv:1412.6980}, 2014.

\bibitem{zinkevich2003online}
M.~Zinkevich, ``Online convex programming and generalized infinitesimal
  gradient ascent,'' in {\em Proceedings of the 20th international conference
  on machine learning (icml-03)}, pp.~928--936, 2003.

\bibitem{hazan2007logarithmic}
E.~Hazan, A.~Agarwal, and S.~Kale, ``Logarithmic regret algorithms for online
  convex optimization,'' {\em Machine Learning}, vol.~69, no.~2-3,
  pp.~169--192, 2007.

\bibitem{little2002statistical}
R.~J. Little and D.~B. Rubin, ``Statistical analysis with missing data. john
  wiley \& sons,'' {\em New York}, 2002.

\bibitem{dempster1977maximum}
A.~P. Dempster, N.~M. Laird, and D.~B. Rubin, ``Maximum likelihood from
  incomplete data via the em algorithm,'' {\em Journal of the Royal Statistical
  Society: Series B (Methodological)}, vol.~39, no.~1, pp.~1--22, 1977.

\bibitem{ghahramani1993supervised}
Z.~Ghahramani and M.~Jordan, ``Supervised learning from incomplete data via an
  em approach,'' {\em Advances in neural information processing systems},
  vol.~6, 1993.

\bibitem{honaker2011amelia}
J.~Honaker, G.~King, and M.~Blackwell, ``Amelia ii: A program for missing
  data,'' {\em Journal of statistical software}, vol.~45, pp.~1--47, 2011.

\bibitem{van2011mice}
S.~Van~Buuren and K.~Groothuis-Oudshoorn, ``mice: Multivariate imputation by
  chained equations in r,'' {\em Journal of statistical software}, vol.~45,
  pp.~1--67, 2011.

\bibitem{bo2004lsimpute}
T.~H. B{\o}, B.~Dysvik, and I.~Jonassen, ``Lsimpute: accurate estimation of
  missing values in microarray data with least squares methods,'' {\em Nucleic
  acids research}, vol.~32, no.~3, pp.~e34--e34, 2004.

\bibitem{troyanskaya2001missing}
O.~Troyanskaya, M.~Cantor, G.~Sherlock, P.~Brown, T.~Hastie, R.~Tibshirani,
  D.~Botstein, and R.~B. Altman, ``Missing value estimation methods for dna
  microarrays,'' {\em Bioinformatics}, vol.~17, no.~6, pp.~520--525, 2001.

\bibitem{burgette2010multiple}
L.~F. Burgette and J.~P. Reiter, ``Multiple imputation for missing data via
  sequential regression trees,'' {\em American journal of epidemiology},
  vol.~172, no.~9, pp.~1070--1076, 2010.

\bibitem{bertsimas2017predictive}
D.~Bertsimas, C.~Pawlowski, and Y.~D. Zhuo, ``From predictive methods to
  missing data imputation: an optimization approach,'' {\em The Journal of
  Machine Learning Research}, vol.~18, no.~1, pp.~7133--7171, 2017.

\bibitem{CERISE-SPI}
CERISE, ``{Social Performance Management in Microfinance: Practices, Results
  and Challenges},'' 2019.

\bibitem{MFI-Sustain2010}
A.~G. Ayayi and M.~Sene, ``What drives microfinance institution's financial
  sustainability,'' {\em The Journal of Developing Areas}, vol.~44, no.~1,
  pp.~303--324, 2010.

\bibitem{WB_FSS2018}
R.~Cull, A.~Demirgüç-Kunt, and J.~Morduch, ``{The Microfinance Business
  Model: Enduring Subsidy and Modest Profit},'' {\em The World Bank Economic
  Review}, vol.~32, pp.~221--244, 04 2018.

\bibitem{van2012impact}
C.~Van~Rooyen, R.~Stewart, and T.~De~Wet, ``The impact of microfinance in
  sub-saharan africa: a systematic review of the evidence,'' {\em World
  development}, vol.~40, no.~11, pp.~2249--2262, 2012.

\bibitem{duvendack2011evidence}
M.~Duvendack, R.~Palmer-Jones, J.~G. Copestake, L.~Hooper, Y.~Loke, and N.~Rao,
  ``{What is the evidence of the impact of microfinance on the well-being of
  poor people?},'' tech. rep., EPPI-Centre, Social Science Research Unit,
  Institute of Education, University of London, London, 2011.

\bibitem{Muthoni2016AssessingIC}
M.~P. Muthoni, ``Assessing institutional characteristics on microcredit default
  in kenya: a comparative analysis of microfinance institutions and financial
  intermediaries,'' {\em Journal of Education and Practice}, vol.~7,
  pp.~178--198, 2016.

\bibitem{Asongo2014TheCO}
A.~Asongo and A.~Idama, ``The causes of loan default in microfinance banks: The
  experience of standard microfinance bank, yola, adamawa state, nigeria,''
  {\em IOSR Journal of Business and Management}, vol.~16, pp.~74--81, 2014.

\bibitem{Kofi2016Causes}
E.~K. Boadi, Y.~Li, V.~C. Lartey, and V.~C. Lartey, ``{Role of Bank Specific,
  Macroeconomic and Risk Determinants of Banks Profitability: Empirical
  Evidence from Ghana's Rural Banking Industry},'' {\em International Journal
  of Economics and Financial Issues}, vol.~6, no.~2, pp.~813--823, 2016.

\bibitem{dorfleitner2016repayment}
G.~Dorfleitner and E.-M. Oswald, ``Repayment behavior in peer-to-peer
  microfinancing: Empirical evidence from kiva,'' {\em Review of Financial
  Economics}, vol.~30, no.~1, pp.~45--59, 2016.

\bibitem{field2008repayment}
E.~Field and R.~Pande, ``Repayment frequency and default in microfinance:
  evidence from india,'' {\em Journal of the European Economic Association},
  vol.~6, no.~2-3, pp.~501--509, 2008.

\bibitem{mpogole2012multiple}
H.~Mpogole, I.~Mwaungulu, S.~Mlasu, and G.~Lubawa, ``Multiple borrowing and
  loan repayment: A study of microfinance clients at iringa, tanzania,'' {\em
  Global Journal of Management and Business Research}, vol.~12, no.~4,
  pp.~97--102, 2012.

\bibitem{jote2018determinants}
G.~G. Jote, ``Determinants of loan repayment: the case of microfinance
  institutions in gedeo zone, snnprs, ethiopia,'' {\em Universal Journal of
  Accounting and Finance}, vol.~6, no.~3, pp.~108--122, 2018.

\bibitem{nawai2010determinants}
N.~Nawai and M.~N.~M. Shariff, ``Determinants of repayment performance in
  microcredit programs: A review of literature,'' {\em International Journal of
  Business and Social Science}, vol.~1, no.~2, 2010.

\bibitem{Schurmann_Johnston_2009}
A.~T. Schurmann and H.~B. Johnston, ``The group-lending model and social
  closure: Microcredit, exclusion, and health in bangladesh,'' {\em Journal of
  Health, Population and Nutrition}, vol.~27, p.~518–527, Sep. 2009.

\bibitem{besley1995group}
T.~Besley and S.~Coate, ``Group lending, repayment incentives and social
  collateral,'' {\em Journal of development economics}, vol.~46, no.~1,
  pp.~1--18, 1995.

\bibitem{ghatak1999group}
M.~Ghatak, ``Group lending, local information and peer selection,'' {\em
  Journal of development Economics}, vol.~60, no.~1, pp.~27--50, 1999.

\bibitem{PakistanRepayment}
M.~Mahmud, ``Repaying microcredit loans: A natural experiment on liability
  structure,'' {\em The Journal of Development Studies}, vol.~56, no.~6,
  pp.~1161--1176, 2020.

\bibitem{armendariz2000microfinance}
B.~Armend{\'a}riz~de Aghion and J.~Morduch, ``Microfinance beyond group
  lending,'' {\em Economics of transition}, vol.~8, no.~2, pp.~401--420, 2000.

\bibitem{postelnicu2014defining}
L.~Postelnicu, N.~Hermes, and A.~Szafarz, ``Defining social collateral in
  microfinance group lending,'' in {\em Microfinance Institutions},
  pp.~187--207, Springer, 2014.

\bibitem{akanga_reuben_johnson1_2021}
A.~R. Johnson1, F.~A. Depesquidoux, and D.~K.~V. Verges, ``{Group Borrowing :
  Microfinance-Tontine Sustainable Co-Existence. Case Of Cameroon},'' {\em
  {Journal of Economics, Finance And Management Studies}}, vol.~4,
  pp.~2345--2355, nov 2021.

\bibitem{KIMUYU19991299}
P.~K. Kimuyu, ``Rotating saving and credit associations in rural east africa,''
  {\em World Development}, vol.~27, no.~7, pp.~1299--1308, 1999.

\bibitem{TontineSenegal_Owen2006}
M.~A. Owen, ``{ Debt and Development: Exploring the microfinance debate in
  Senegal},'' {\em CUREJ: College Undergraduate Research Electronic Journal,
  University of Pennsylvania}, December 2006.

\bibitem{WomenAssoc2002}
D.~Perry, ``Microcredit and women moneylenders: The shifting terrain of credit
  in rural senegal,'' {\em Human Organization}, vol.~61, no.~1, pp.~30--40,
  2002.

\bibitem{WomenAssoc2000}
L.~Mayoux, ``Micro-finance and the empowerment of women: a review of the key
  issues,'' ilo working papers, International Labour Organization, 2000.

\bibitem{abdallahalihal-03375661}
M.~Abdallah~Ali, M.~Mughal, and D.~Chhorn, ``{Empowering Women through
  Microfinance in Djibouti},'' Oct. 2021.
\newblock working paper or preprint.

\bibitem{Khaleghi2020}
A.~Khaleghi, {\em {Towards Achieving Gender Equality in Automated Loan Approval
  Processes}}.
\newblock Master thesis, Delft University of Technology, 2020.

\bibitem{kneiding2008variations}
C.~Kneiding and R.~Rosenberg, ``Variations in microcredit interest rates,''
  {\em Consultative Group to Assist the Poor (CGAP)}, 2008.

\bibitem{hartley2010kiva}
S.~E. Hartley, ``Kiva. org: Crowd-sourced microfinance and cooperation in group
  lending,'' {\em Available at SSRN 1572182}, 2010.

\bibitem{The2016Vault}
N.~{Patki}, R.~{Wedge}, and K.~{Veeramachaneni}, ``The synthetic data vault,''
  in {\em 2016 IEEE International Conference on Data Science and Advanced
  Analytics (DSAA)}, pp.~399--410, Oct 2016.

\bibitem{TheMathWorks2019MATLAB2019b}
{The MathWorks}, ``{MATLAB 2019b},'' 2019.

\bibitem{rosenblatt1957perceptron}
F.~Rosenblatt, {\em The Perceptron, a Perceiving and Recognizing Automaton
  Project Para}.
\newblock Report: Cornell Aeronautical Laboratory, Cornell Aeronautical
  Laboratory, 1957.

\end{thebibliography}

\appendix
\crefalias{section}{appsec}

\section{Proofs}
\label{S:Proofs}
\textit{\textbf{Proof of Proposition \ref{prop: concavity}}} 
\begin{proof}[\unskip\nopunct]
For the decomposable case, we have 
\begin{align}
\centering
    V(z) & = \mathbb{E} \left[R(\hat{S}, M, A, B)\right] \\
    & =\mathbb{E}\left[\sum_{a=0, 1} \mathbb{E}[R(\hat{S},M, A, B) \mid A=a] \pi_z(\hat{S},M,a)\right] \\
    & = \mathbb{E}\left[ \mathbb{E}[R(\hat{S}, M, A, B) \mid A=1] \pi_z(\hat{S}, M, 1) \right] \label{dagger} \\
    & = \mathbb{E} \left[ \mathbb{E} [R(\hat{S}, M, A, B) \mid A=1] L(q) \right] \label{star},
\end{align}
where \eqref{dagger} is because $\mathbb{E}[R(\hat{S},M, A, B) \mid A=0] = 0$. Since $\mathbb{E}[R(\hat{S},M, A, B \mid A=1]$ is independent from $z$, we can see from \eqref{star} that $V(z)$ is a linear combination of $L(q)$ and from \eqref{eq:q_decomposible}, $q$ is a linear function of $z$. Therefore, if $q$ is a concave function of $z$, then $L(q)$ and $V(z)$ are concave functions of $z$.
\end{proof}
\begin{proof}[\unskip\nopunct]
In the non-decomposable case, we have 
\begin{align}
\centering
    V(z) & = \mathbb{E} \left[R(\hat{S}, M, A, B)\right] \\
    & = \mathbb{E}\left[\sum_{a=0, 1} \mathbb{E}[R(\hat{S},M, A, B) \mid A=a] \pi_z(\hat{S},M,a)\right] \\
    & = \mathbb{E}\Big[ \mathbb{E}[R(\hat{S}, M, A, B) \mid A=1] \pi_z(\hat{S}, M, 1) + \mathbb{E}[R(\hat{S}, M, A, B) \mid A = 0](1-\pi_z(\hat{S}, M, 1))\Big] \\
    & = \mathbb{E} \bigg[ \Big(\mathbb{E} [R(\hat{S}, M, A, B) \mid A=1] - \mathbb{E} [R(\hat{S}, M, A, B) \mid A=0]\Big) L(q) + \mathbb{E} [R(\hat{S}, M, A, B) \mid A=0] \bigg].
\end{align}
Since $\mathbb{E} [R(\hat{S}, M, A, B) \mid A=0]$ and $\mathbb{E} [R(\hat{S}, M, A, B) \mid A=1]$ are independent from $z$, we can see that $V(z)$ is a linear combination of $L(q)$. Therefore, $V(z)$ is concave with respect to $z$ when $L(q)$ is concave.  
\end{proof}

\begin{proof}[\textbf{Proof of \cref{lm:F equal grad}}]
Let $\mathcal{N} = \{1, 2, \hdots, N\}$ be the index of the applications in one lending period. Then, let $\mathbf{\hat{s}} = \{\hat{s}_1, \hat{s}_2, \hdots, \hat{s}_N\},\ \mathbf{m}=\{m_1,m_2,\hdots,m_N\},\ \mathbf{a}=\{a_1, a_2, \hdots, a_N\},\text{ and } \mathbf{b}=\{b_1,b_2,\hdots,b_N\}$ be the vectors containing the accessible information, group size, lending decision, and outcomes of all applicants, respectively. For all $k \in \{1, \ldots, 2N\}$, we have 
\begin{align}
\centering
    \mathbb{E} \Big[F_{z}[k] \mid z \Big] & = \mathbb{E} \Bigg[ \left(\mathcal{R}(\mathbf{\hat{s}},\mathbf{m},\mathbf{a},\mathbf{b})-\bar{R}\right) \sum_{i-1}^{N} \cfrac{1}{\pi_z (\hat{s}_{i},m_{i},a_{i})} \cfrac{\partial \pi_z (\hat{s}_{i},m_{i},a_{i})}{\partial z[k]} \mid z \Bigg]\\
    & = \mathbb{E} \left[\Big(\mathcal{R}(\mathbf{\hat{s}},\mathbf{m},\mathbf{a},\mathbf{b})-\bar{R}\Big) \left(\cfrac{\cfrac{\partial}{\partial z[k]} \pi_z\left(\hat{S},M,A\right)}{\pi_z\left(\hat{S},M,A\right)}\right)\right] \label{eqn:35} \\
    & = \mathbb{E} \left[\rule{0cm}{1cm}\right. \sum_{\mathbf{a}} \left(\rule{0cm}{0.95cm}\right. \left(\rule{0cm}{0.8cm}\right. \cfrac{\pi_z\left(\hat{S},m,a\right)}{\pi_z\left(\hat{S},m,a\right)} \left.\rule{0cm}{0.8cm}\right) \mathbb{E} \Big[\mathcal{R}(\mathbf{\hat{s}},\mathbf{m},\mathbf{a},\mathbf{b}) - \bar{R} \mid A,M,\hat{S} \Big] \left(\cfrac{\partial}{\partial z[k]}\ \pi_z\left(\hat{S},M,A\right) \right) \left.\rule{0cm}{0.95cm}\right) \left.\rule{0cm}{1cm}\right] \label{eqn: 36} \\
    & = \mathbb{E} \left[\rule{0cm}{0.75cm}\right. \mathbb{E} \Bigg[ \sum_{\mathbf{a}} \left( \mathcal{R}(\mathbf{\hat{s}},\mathbf{m},\mathbf{a},\mathbf{b})-\bar{R}\right) \cfrac{\partial}{\partial z[k]}\ \pi_z\left(A,M, \hat{S}\right) \mid A,M,\hat{S} \Bigg]\left.\rule{0cm}{0.75cm}\right] \label{eqn: 37} \\
    & = \cfrac{\partial}{\partial z[k]}\ \mathbb{E} \left[\rule{0cm}{0.6cm}\right. \mathbb{E} \Big[ \sum_{\mathbf{a}} \left( \mathcal{R}(\mathbf{\hat{s}},\mathbf{m},\mathbf{a},\mathbf{b}) - \bar{R} \right)\pi_z\left(A,M,\hat{S}\right) \mid A,M,\hat{S} \Big] \left.\rule{0cm}{0.6cm}\right] \label{eqn: 38} \\
    & = \cfrac{\partial}{\partial z[k]}\ \mathbb{E}\big[\mathcal{R}(\mathbf{\hat{s}},\mathbf{m},\mathbf{a},\mathbf{b})\big] \label{eqn:39} \\
    & = \cfrac{\partial V(z)}{\partial z[k]}.
\end{align}
Equality \eqref{eqn: 36} is by law of total probability, \eqref{eqn: 37} is because of the fact that $\bar{R}$ is independent of $z$ while $\sum_{A\in\{0,1\}} \cfrac{\partial}{\partial z[k]} \pi_z(\hat{S}, A) = 0$, and \eqref{eqn: 38} is by linearity of expectation. Here, the expectations in \eqref{eqn:35} and \eqref{eqn:39} are taken over $\pro(S,\hat{S},A,B)$. The first expectations in \eqref{eqn: 36}, \eqref{eqn: 37}, and \eqref{eqn: 38} are taken over $\pro(S,\hat{S})$ while the second expectations are taken over $\pro(B \mid A, \hat{S})$.
\end{proof}

\begin{proof}[\textbf{Proof of \cref{lm:F no decomsable}}]
Using the same notation as in the Proof of \cref{lm:F equal grad}, we have,
\begin{align}
\centering
    \mathbb{E}\left[F_z[k] \mid z\right] & = \mathbb{E}\Bigg[\Big(\mathcal{R}(\mathbf{\hat{s}},\mathbf{m},\mathbf{a},\mathbf{b})-\bar{R}\Big) \sum_{i=1}^{N} \cfrac{1}{\pi_z(s_i,m_i,a_i)} \cfrac{\partial \pi_z(s_i,m_i,a_i)}{\partial z[k]}\Bigg] \label{prf_lm_4.4_ln1}\\
    & = \mathbb{E}\Bigg[\sum_{\mathbf{a}} \mathbb{E}\Big[\mathcal{R}(\mathbf{\hat{s}},\mathbf{m},\mathbf{a},\mathbf{b})-\bar{R} \mid \hat{\mathbf{s}},\mathbf{m},\mathbf{a},\mathbf{b} \Big] \sum_{i=1}^{N} \cfrac{\pi_z(s_i,m_i,a_i)}{\pi_z(s_i,m_i,a_i)} \cfrac{\partial \pi_z(s_i,m_i,a_i)}{\partial z[k]}\Bigg] \label{prf_lm_4.4_ln2}\\
    & = \mathbb{E}\Bigg[\sum_{\mathbf{a}} \mathbb{E}\Big[\mathcal{R}(\mathbf{\hat{s}},\mathbf{m},\mathbf{a},\mathbf{b})-\bar{R} \mid \hat{\mathbf{s}},\mathbf{m},\mathbf{a},\mathbf{b} \Big] \sum_{i=1}^{N} \cfrac{\partial \pi_z(s_i,m_i,a_i)}{\partial z[k]}\Bigg] \\
    & = \mathbb{E}\Bigg[\sum_{\mathbf{a}} \mathbb{E}\Big[\mathcal{R}(\mathbf{\hat{s}},\mathbf{m},\mathbf{a},\mathbf{b}) \mid \hat{\mathbf{s}},\mathbf{m},\mathbf{a},\mathbf{b} \Big] \sum_{i=1}^{N} \cfrac{\partial \pi_z(s_i,m_i,a_i)}{\partial z[k]}\Bigg] \label{prf_lm_4.4_ln4}\\
    & = \cfrac{\partial}{\partial z[k]}\ \mathbb{E}\Bigg[\sum_{\mathbf{a}} \mathbb{E}\Big[\mathcal{R}(\mathbf{\hat{s}},\mathbf{m},\mathbf{a},\mathbf{b}) \mid \hat{\mathbf{s}},\mathbf{m},\mathbf{a},\mathbf{b} \Big] \sum_{i=1}^{N} \pi_z(s_i,m_i,a_i) \Bigg] \\
    & = \cfrac{\partial}{\partial z[k]}\ \mathbb{E}\Big[\mathcal{R}(\mathbf{\hat{s}},\mathbf{m},\mathbf{a},\mathbf{b}) \Big] \label{prf_lm_4.4_ln6}\\
    & = \cfrac{\partial V(z)}{\partial z[k]}. \label{prf_lm_4.4_ln7}
\end{align}
Here, \eqref{prf_lm_4.4_ln1} is the expectation taken over $\mathbf{\hat{s}},\mathbf{m},\mathbf{a},\text{ and }\mathbf{b}$; the first expectation in \eqref{prf_lm_4.4_ln2} is taken over $\mathbf{\hat{s}} \text{ and } \mathbf{m}$ while the second expectation is taken over $\mathbf{b}$; \eqref{prf_lm_4.4_ln6} holds because $\sum_{a_i\in \{0,1\}} \pi_z(s_i,m_i,a_i) = 1$; \eqref{prf_lm_4.4_ln7} is due to the definition of $V(z)$.
\end{proof}

\noindent\textbf{Gradient inequality lemma.} To bound the regret of a telescoping sum, we present the following:
\begin{lemma} \label{lm:grad to norm}
The following condition holds:
\begin{flalign} \label{eq:grad to norm}
    \mathbb{E}\Big[2 \nabla_{z}V(z_t) \cdot (z_t-z^*) \mid z_t\Big] \leq \cfrac{1}{\alpha_t}\left(\mathbb{E}\Big[\|z_t-z^*\|^2 - \|z_{t+1}-z^*\|^2 \mid z_t\Big]\right) + \alpha_t G^2, 
\end{flalign}
where $V$ is defined in \cref{S:Problem_Statement}.
\end{lemma}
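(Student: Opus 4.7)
The plan is to carry out the standard one-step projected-ascent analysis on the squared distance to $z^*$, and then take a conditional expectation so that the stochastic direction $F_{z_t}$ can be replaced by $\nabla V(z_t)$ via Lemma \ref{lm:F equal grad}. First I would invoke non-expansiveness of the Euclidean projection onto the closed convex set $\mathcal{Z}$: since $z^*\in\mathcal{Z}$ and $z_{t+1}=\proj_{\mathcal Z}(\hat z_{t+1})$, this gives $\|z_{t+1}-z^*\|^2\le\|\hat z_{t+1}-z^*\|^2$.

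Next I would expand the right-hand side using the unprojected update $\hat z_{t+1}=z_t+\alpha_t F_{z_t}$ to obtain $\|\hat z_{t+1}-z^*\|^2=\|z_t-z^*\|^2+2\alpha_t F_{z_t}\cdot(z_t-z^*)+\alpha_t^2\|F_{z_t}\|^2$. Combining the two displays, isolating the inner-product term, and dividing by $\alpha_t$ yields the pointwise bound $-2\,F_{z_t}\cdot(z_t-z^*)\le\frac{1}{\alpha_t}\bigl(\|z_t-z^*\|^2-\|z_{t+1}-z^*\|^2\bigr)+\alpha_t\|F_{z_t}\|^2$. Taking conditional expectation given $z_t$, substituting $\mathbb{E}[F_{z_t}\mid z_t]=\nabla V(z_t)$ from Lemma \ref{lm:F equal grad}, and applying $\mathbb{E}[\|F_{z_t}\|^2\mid z_t]\le G^2$ from condition~3 delivers
\begin{align*}
2\,\nabla V(z_t)\cdot(z^*-z_t)\;\le\;\frac{1}{\alpha_t}\,\mathbb{E}\!\left[\|z_t-z^*\|^2-\|z_{t+1}-z^*\|^2\,\middle|\,z_t\right]+\alpha_t G^2.
\end{align*}

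The main obstacle is an apparent sign mismatch between the lemma as written and what the update rule genuinely supports. Because $\hat z_{t+1}=z_t+\alpha_t F_{z_t}$ is an ascent step and $\mathbb{E}[F_{z_t}\mid z_t]=\nabla V(z_t)$, the one-step expansion controls $\nabla V(z_t)\cdot(z^*-z_t)$ from above, not $\nabla V(z_t)\cdot(z_t-z^*)$; there is no legitimate way to flip the inner-product sign while keeping the same direction of inequality. The ascent-side inequality displayed above is, however, exactly what the proof of Theorem~\ref{thm:convergence} consumes: combining it with concavity, $V(z^*)-V(z_t)\le\nabla V(z_t)\cdot(z^*-z_t)$, summing in $t$, telescoping $\sum_t\frac{1}{\alpha_t}\bigl(\|z_t-z^*\|^2-\|z_{t+1}-z^*\|^2\bigr)$ against the bounded-domain condition~2 (yielding the $D^2$ term), and using $\sum_t\alpha_t=C(T)$ on the residual, reproduces \eqref{eq:convergence}. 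I would therefore present the proof in the natural ascent orientation $(z^*-z_t)$ and flag the $(z_t-z^*)$ appearing in the statement as a typographical sign flip, since the theorem's argument requires the orientation I derive rather than the one literally written.
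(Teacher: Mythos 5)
Your proposal is correct and follows essentially the same route as the paper's own proof: non-expansiveness of $\proj_{\mathcal Z}$, expansion of $\|z_t+\alpha_t F_{z_t}-z^*\|^2$, conditional expectation combined with \cref{lm:F equal grad}, and the second-moment bound $G^2$. Your diagnosis of the sign is also borne out by the paper itself: its proof actually derives the bound on $2\nabla_z V(z_t)\cdot(z^*-z_t)$ and \eqref{eq:cnvrge_prf_2} in the proof of \cref{thm:convergence} consumes exactly that orientation, so the $(z_t-z^*)$ in the lemma statement is indeed a typographical flip rather than a gap in your argument.
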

\begin{proof}
Let us consider $\mathbb{E} \Big[ \|z_{t+1} - z^* \|^2 \mid z_t \Big]$, where
\begin{align}
\centering
    \mathbb{E} \Big[ \|z_{t+1} - z^* \|^2 \mid z_t \Big] & = \mathbb{E} \Big[ \|\proj_Z (z_t + \alpha_tF_{z_t}) - z^* \|^2 \mid z_t \Big] \\
    & \leq \mathbb{E} \Big[ \|z_{t} + \alpha_t F_{z_t} - z^* \|^2 \mid z_t \Big] \\
    & \leq \mathbb{E} \Big[ \|z_t - z^* \|^2 + \alpha_t^2 \| F_{z_t} \|^2 - 2\alpha_t F_{z_t} \cdot (z^*-z_t) \mid z_t \Big] \\
    & = \mathbb{E} \Big[ \|z_t - z^* \|^2 + \alpha_t^2 \| F_{z_t} \|^2 \mid z_t \Big] - \mathbb{E} \Big[2\alpha_t F_{z_t} \cdot (z^*-z_t) \mid z_t \Big]\\
    & = \|z_t-z^*\|^2 + \alpha_t^2 \mathbb{E} \Big[\|F_{z_t} \|^2 \mid z_t \Big] - 2\alpha_t \mathbb{E} \Big[F_{z_t} \cdot  (z^*-z_t) \mid z_t \Big] \\
    & = \|z_t-z^*\|^2 + \alpha_t^2 \mathbb{E} \Big[\|F_{z_t} \|^2 \mid z_t \Big] - 2\alpha_t \nabla_{z_t}V(z_t)\cdot(z^*-z_t), \label{eq:lm4thm1} \\
    \intertext{therefore,} 
   2 \nabla_{z_t}V(z_t)\cdot(z^*-z_t) & \leq \cfrac{-\mathbb{E} \Big[ \|z_{t+1} - z^* \|^2 \mid z_t \Big] + \|z_t-z^*\|^2 }{\alpha_t} + \cfrac{\alpha_t^2 \mathbb{E} \Big[\|F_{z_t} \|^2 \mid z_t \Big]}{\alpha_t} \\
    & \leq \cfrac{-\mathbb{E} \Big[ \|z_{t+1} - z^* \|^2 \mid z_t \Big] + \|z_t-z^*\|^2}{\alpha_t} + \alpha_t G^2, \label{eq:lm4thm2} \\
    \intertext{thus,} 
    \mathbb{E} \Big[2 \nabla_{z_t}V(z_t)\cdot(z^*-z_t) \mid z_t \Big] & \leq \cfrac{-\mathbb{E} \Big[ \|z_{t+1} - z^* \|^2 \mid z_t \Big] + \|z_t-z^*\|^2}{\alpha_t} + \alpha_t G^2. 
\end{align}
Here, \eqref{eq:lm4thm1} is due to \cref{lm:F equal grad} and \eqref{eq:lm4thm2} is due to the assumption $\mathbb{E}\left[\|F_{z_t}\|^2 \mid z_t \right] \leq G^2$. Thus, we can then recover \eqref{eq:grad to norm}.
\end{proof}

\begin{proof}[\textbf{Proof of \cref{thm:convergence}}]
Given \cref{lm:F equal grad,lm:grad to norm}, we can proof \cref{thm:convergence}:
    \begin{align}
    \centering
        \mathbb{E}\left[2\sum_{t=1}^T\big(V(z^*)-V(z_t)\big)\right] & \leq \mathbb{E}\left[2\sum_{t=1}^T\big(\nabla_{z}V(z_t) \cdot (z^*-z_t)\big)\right] \label{eq:cnvrge_prf_0}\\
        & \leq \mathbb{E}\left[2\sum_{t=1}^T \mathbb{E}\bigg[\nabla_{z}V(z_t) \cdot (z^*-z_t) \mid z_t\bigg]\right] \label{eq:cnvrge_prf_1} \\
        & \leq \mathbb{E} \Bigg[\sum_{t=1}^T \cfrac{\mathbb{E}\left[\|z_{t+1}-z^*\|^2 + \|z_t-z^*\|^2 \mid z_t\right]}{\alpha_t} +  G^2 \sum_{t=1}^T \alpha_t \Bigg] \label{eq:cnvrge_prf_2} \\
        & \leq \sum_{t=1}^T \mathbb{E}\left[\|z_t-z^*\|^2\right] \left(\cfrac{1}{\alpha_t} - \cfrac{1}{\alpha_{t-1}}\right) + G^2 \sum_{t=1}^T \alpha_t \label{eq:cnvrge_prf_3} \\
        & \leq D^2 \sum_{t=1}^T \left(\cfrac{1}{\alpha_t} - \cfrac{1}{\alpha_{t-1}}\right) + G^2 \sum_{t=1}^T \alpha_t \label{eq:cnvrge_prf_4} \\
        & \leq \cfrac{1}{\alpha_T}D^2 + G^2 C(T). \label{eq:cnvrge_prf_5} 
    \end{align}
    Here, \eqref{eq:cnvrge_prf_0} is due to the concavity of $V(z_t)$; \eqref{eq:cnvrge_prf_1} is due to the law of total expectation; \eqref{eq:cnvrge_prf_2} is due to \cref{lm:grad to norm}; \eqref{eq:cnvrge_prf_3} is due to the law of total expectation; \eqref{eq:cnvrge_prf_4} is due to assumption \eqref{eq: assmptn2}; and \eqref{eq:cnvrge_prf_5} is due to assumption \eqref{eq: assmptn3}.
\end{proof}

\begin{proof}[\textbf{Proof of \cref{crl:step_size}}]
We have
\begin{align}
\centering
    \mathbb{E}\left[\sum_{t=1}^T \left(V(z^*)-V(z_t)\right)\right] & \leq \cfrac{1}{2} \left(D^2 \sum_{t=1}^T \left(\cfrac{1}{\alpha_t} - \cfrac{1}{\alpha_{t-1}}\right) + G^2 \sum_{t=1}^T \alpha_t \right) \label{eq:cr_prf_1} \\
    & \leq \cfrac{1}{2} \left(D^2 \cfrac{G\sqrt{T}}{D} + G^2 \cfrac{D}{G} 2\sqrt{T} \right) \label{eq:cr_prf_2} \\
    & = \cfrac{1}{2} \left(DG\sqrt{T} + 2GD\sqrt{T} \right) \\
    & = \cfrac{3}{2}DG\sqrt{T}, \label{eq:cr_prf_3}
\end{align}
where \eqref{eq:cr_prf_1} is due to \eqref{eq:cnvrge_prf_4}; \eqref{eq:cr_prf_2} is due to $\sum_{t=1}^T \cfrac{1}{\sqrt{t}} \leq 2\sqrt{T}$. We recover \eqref{eq:step_size_thm} by dividing \eqref{eq:cr_prf_3} by $T$.
\end{proof}

\section{Tables of Data}
\label{S:Tables_of_Data}
Here, we present the description of the features in the kiva data together with our decision in \cref{tab:kiva dataset}. We also present 30 different distributions to generate the artificial data pools. The 18 bounded distribution with positive weighting features can be seen in \cref{tab:distribution list}, the 8 distributions with negative weight for the features are shown in \cref{tab:distribution list flip}, and \cref{tab:distribution list unbounded} shows the 4 unbounded distributions.

\begin{table*}[htbp]
\centering
\scriptsize
\begin{tabular}{l l l l}\toprule
Feature Number & Feature Name &Feature Description & Data Preprocessing Decision \\\midrule
1 & id & loan ID & used \\
2 & description.languages & language of loan description & used \\
3 & funded\_amount & amount of loan has been collected by lenders & not available in the beginning \\
4 & paid\_amount & amount of the loan which has been paid off & not available in the beginning \\
5 & activity & activity for requested loan & used \\
6 & Sector & sector for requested loan & used \\
7 & location.country\_code & country code & used \\
8 & location.country & country name & used \\
9 & location.town & town name & used \\
10 & location.geo.level & latitude and longitude indicator for country or town & used \\
11 & partner\_id & partner ID & used \\
12 & borrowers.first\_name & first name of borrower & hard to convert \\
13 & borrowers.last\_name & last name of borrower & hard to convert \\
14 & borrowers.gender & gender of borrower & used \\
15 & borrowers.pictured & borrowers' picture availability & used \\
16 & terms.disbursal\_amount & distributed amount in the local currency & conflict with our assumption \\
17 & terms.disbursal\_currency & distributed currency & used \\
18 & terms.disbursal\_date & distributed date & conflict with our assumption  \\
19 & paid\_date & Fully paid date & not available in the beginning\\
20 & defaulted\_date & defaulted date & not available in the beginning\\
21 & terms.loan\_amount & the amount of money distributed & conflict with our assumption \\
22 & terms.loss\_liability.nonpayment & who is liable for non repayment & used \\
23 & terms.loss\_liability.currency\_exchange & who is liable for currency exchange loss & used \\
24 & posted\_date &loan posted date on Kiva & \multirow{2}{*}{to calculate the duration: used} \\
25 & funded\_date & fully funded date & \\
26 & journal\_total.entries & number of updates by borrower & not available in the beginning \\
27 & terms.local\_payments.due\_date & payment is due date to the field partner & conflict with our assumption \\
28 & terms.local\_payments.amount & amount due to the field partner & conflict with our assumption \\
29 & terms.scheduled\_payments.due\_date & scheduled payment due date & conflict with our assumption \\
30 & terms.scheduled\_payments.amount & scheduled payment due amount & conflict with our assumption \\
31 & delinquent & whether has become delinquent & not available in the beginning \\
32 & video.youtube\_id & youtube id if provide a video & used \\
33 & basket\_amount & amount of loan saved but not confirmed & conflict with our assumption \\
34 & amount & payment amount in US dollars & conflict with our assumption \\
35 & payment\_id & payment ID & conflict with our assumption \\
36 & local\_amount & payment amount in local currency & conflict with our assumption \\
37 & processed\_date & processed date & conflict with our assumption \\
38 & rounded\_local\_amount & rounded local payment amount & conflict with our assumption \\
39 & settlement\_date & payment settlement date & conflict with our assumption \\
40 & lat & latitude of loan location & used \\
41 & lon & longitude of loan location & used \\
42 & status & paid or defaulted status & label \\
\bottomrule
\end{tabular}
\caption{Description of the features and data pre-processing decision of the Kiva dataset. In our study, we only considered the features with the "\textit{used}" label.}\label{tab:kiva dataset}
\end{table*}

\begin{table*}[htbp]
    \centering
    \begin{tabular}{|l||c|c|c|}\hline
        \backslashbox{repayment \\ probability, \\ $\pro\left(B = 1\mid \hat{S},M\right)$}{features \\ distribution, \\ $\pro \left(S\right)$} & 
        \begin{minipage}{.19\textwidth}
          \includegraphics[width=\linewidth]{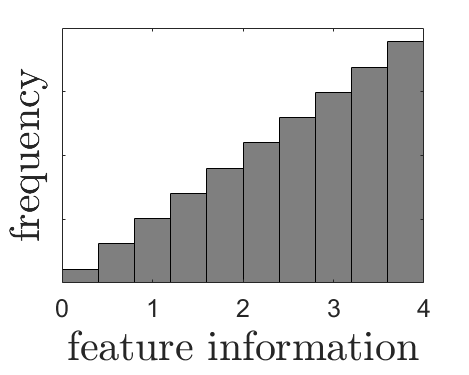}
        \end{minipage} & 
        \begin{minipage}{.19\textwidth}
          \includegraphics[width=\linewidth]{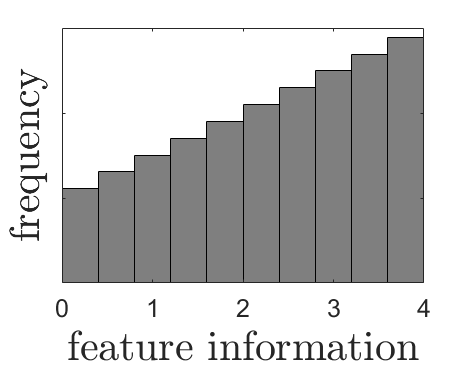}
        \end{minipage} & 
        \begin{minipage}{.19\textwidth}
          \includegraphics[width=\linewidth]{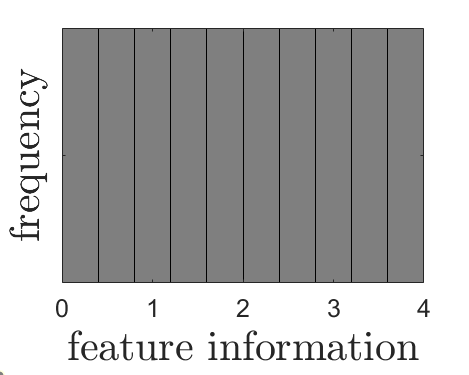}
        \end{minipage} \\\hline\hline
        \begin{tabular}{@{}l@{}} $q_B = \cfrac{1}{n} \sum_{j=1}^n s[j]$ \\ $\pro\left(B = 1\mid \hat{S},M\right) = \cfrac{1}{4}q_B$ \end{tabular} & 
        \begin{minipage}{.19\textwidth}
          \includegraphics[width=\linewidth]{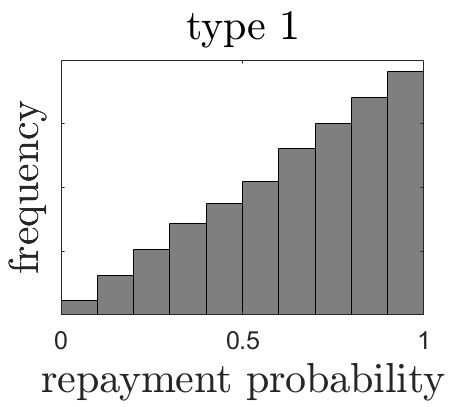}
        \end{minipage} & 
        \begin{minipage}{.19\textwidth}
          \includegraphics[width=\linewidth]{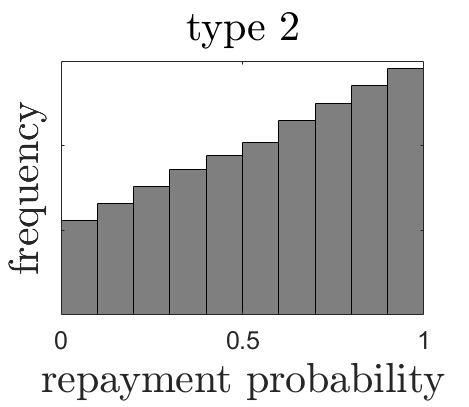}
        \end{minipage} & 
        \begin{minipage}{.19\textwidth}
          \includegraphics[width=\linewidth]{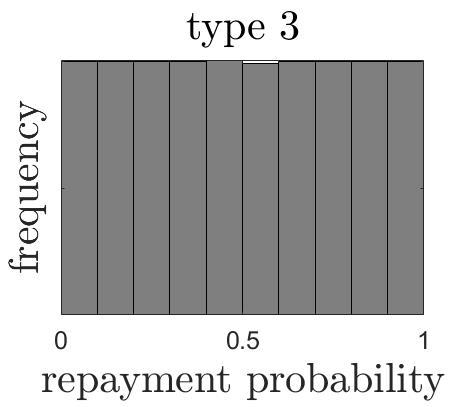}
        \end{minipage} \\\hline
        \begin{tabular}{@{}l@{}} $q_B = \cfrac{1}{n} \sum_{j=1}^n s[j]$ \\ $\pro\left(B = 1\mid \hat{S},M\right) = -\cfrac{1}{16} q_B^2 + \cfrac{1}{2} q_B$ \end{tabular} & 
        \begin{minipage}{.19\textwidth}
          \includegraphics[width=\linewidth]{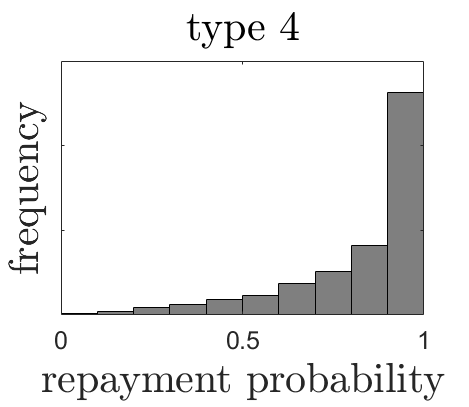}
        \end{minipage} &
        \begin{minipage}{.19\textwidth}
          \includegraphics[width=\linewidth]{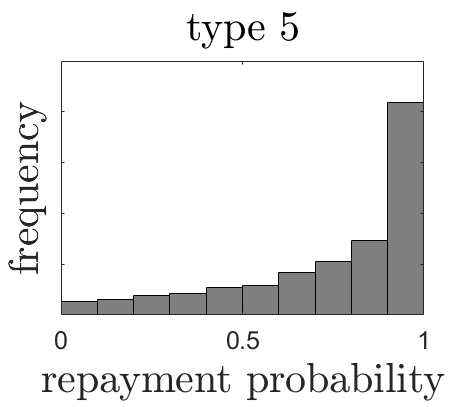}
        \end{minipage} & 
        \begin{minipage}{.19\textwidth}
          \includegraphics[width=\linewidth]{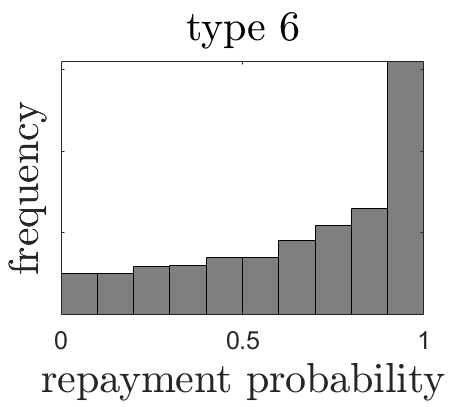}
        \end{minipage} \\\hline
        \begin{tabular}{@{}l@{}} $q_B = \cfrac{1}{n} \sum_{j=1}^n 2 s[j] - 4$ \\ $\pro\left(B = 1\mid \hat{S},M\right) = \cfrac{\exp(q_B)}{1+\exp(q_B)}$ \end{tabular} & 
        \begin{minipage}{.19\textwidth}
          \includegraphics[width=\linewidth]{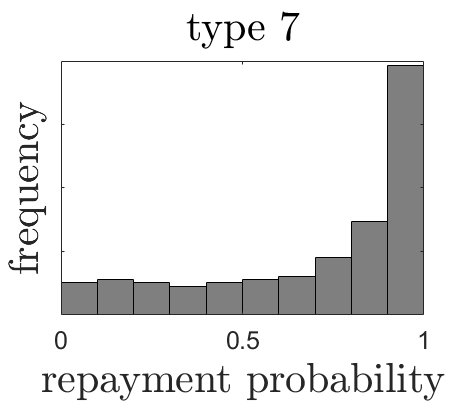}
        \end{minipage} & 
        \begin{minipage}{.19\textwidth}
          \includegraphics[width=\linewidth]{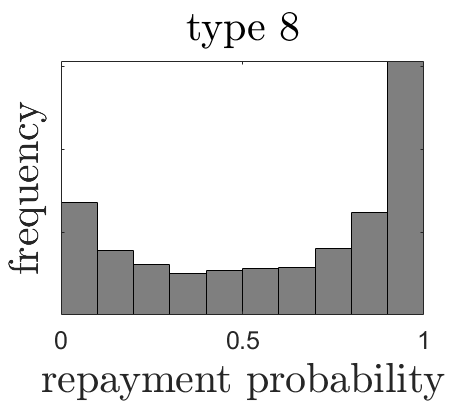}
        \end{minipage} & 
        \begin{minipage}{.19\textwidth}
          \includegraphics[width=\linewidth]{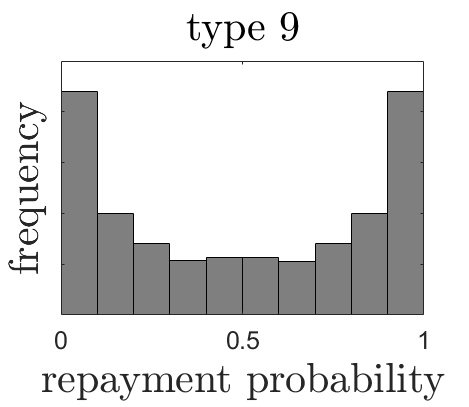}
        \end{minipage} \\\hline
        \begin{tabular}{@{}l@{}} $q_B = \cfrac{1}{n} \sum_{j=1}^n 2.5 s[j] - 4$ \\ $\pro\left(B = 1\mid \hat{S},M\right) = \cfrac{\exp(q_B)}{1+\exp(q_B)}$ \end{tabular} & 
        \begin{minipage}{.19\textwidth}
          \includegraphics[width=\linewidth]{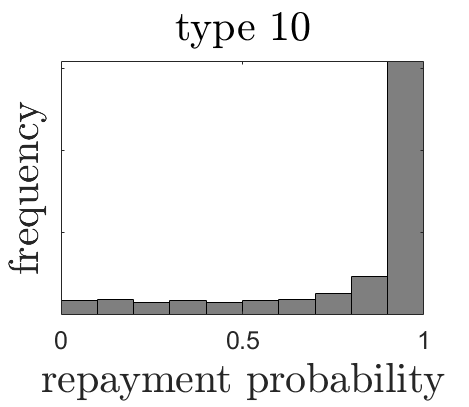}
        \end{minipage} & 
        \begin{minipage}{.19\textwidth}
          \includegraphics[width=\linewidth]{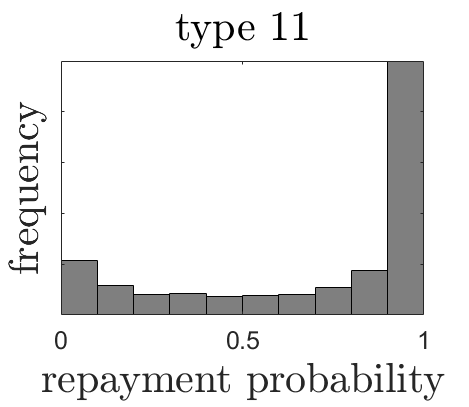}
        \end{minipage} & 
        \begin{minipage}{.19\textwidth}
          \includegraphics[width=\linewidth]{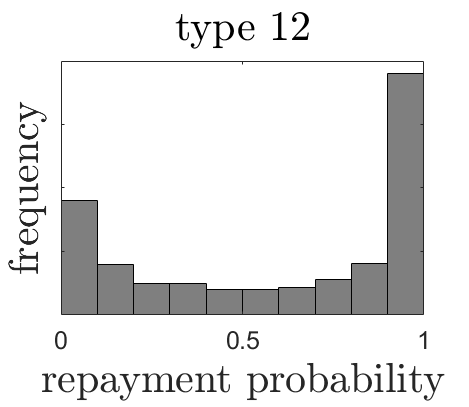}
        \end{minipage} \\\hline
        \begin{tabular}{@{}l@{}} $q_B = \cfrac{1}{n} \sum_{j=1}^n 3 s[j] - 4$ \\ $\pro\left(B = 1\mid \hat{S},M\right) = \cfrac{\exp(q_B)}{1+\exp(q_B)}$ \end{tabular} & 
        \begin{minipage}{.19\textwidth}
          \includegraphics[width=\linewidth]{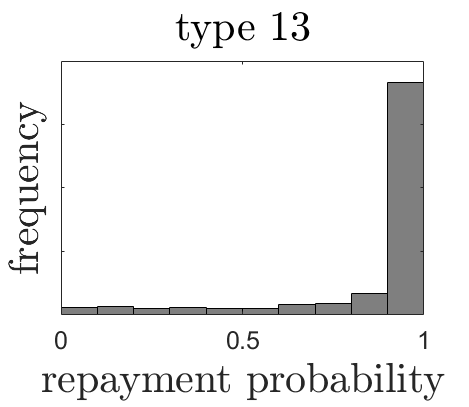}
        \end{minipage} & 
        \begin{minipage}{.19\textwidth}
          \includegraphics[width=\linewidth]{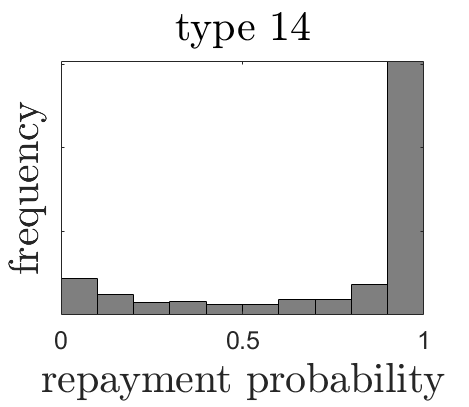}
        \end{minipage} & 
        \begin{minipage}{.19\textwidth}
          \includegraphics[width=\linewidth]{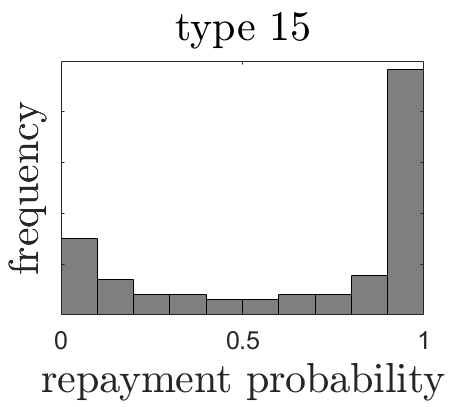}
        \end{minipage} \\\hline
        \begin{tabular}{@{}l@{}} $\{\mathscr{W}[1],...,\mathscr{W}[n]\} \sim \mathcal{N}(2,4^2)$ \\ $q_B = \cfrac{1}{n} \sum_{j=1}^n \mathscr{W}[j] s[j] - 4$ \\ $\pro\left(B = 1\mid \hat{S},M\right) = \cfrac{\exp(q_B)}{1+\exp(q_B)}$ \end{tabular} & 
        \begin{minipage}{.19\textwidth}
          \includegraphics[width=\linewidth]{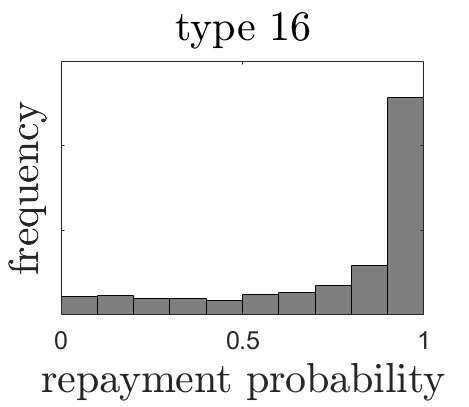}
        \end{minipage} & 
        \begin{minipage}{.19\textwidth}
          \includegraphics[width=\linewidth]{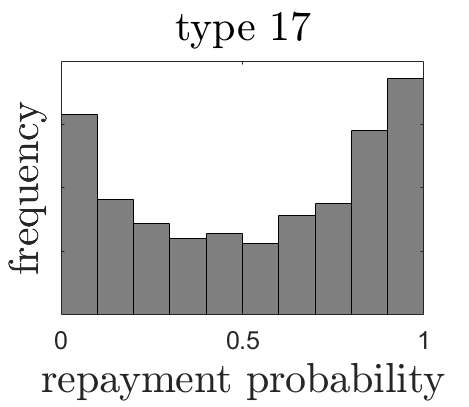}
        \end{minipage} & 
        \begin{minipage}{.19\textwidth}
          \includegraphics[width=\linewidth]{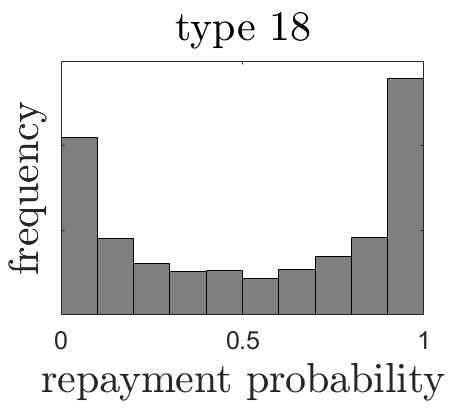}
        \end{minipage} \\\hline
    \end{tabular}
    \caption{List of the repayment probability distributions considered in our study.}
    \label{tab:distribution list}
\end{table*}

\begin{table*}[htbp]
    \centering
    \begin{tabular}{|l||c|c|}\hline
        \multicolumn{1}{|c||}{\backslashbox{repayment \\ probability, \\ $\pro\left(B = 1\mid \hat{S},M\right)$}{features \\ distribution, \\ $\pro \left(S\right)$}} & 
        \begin{tabular}{@{}l@{}}
        \begin{minipage}{.19\textwidth}
          \includegraphics[width=\linewidth]{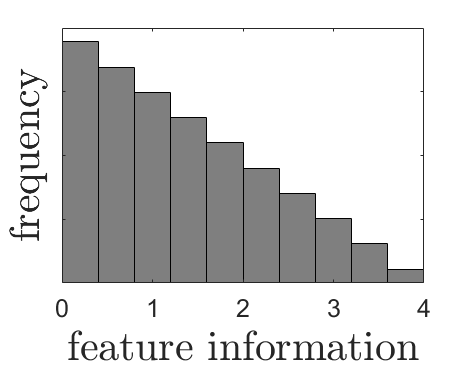}
        \end{minipage} \\
        $\pro\left(S\right)_l = \cfrac{2}{99} - \cfrac{{bin}_l}{4950}$ \\
        \  
        \end{tabular} & 
        \begin{tabular}{@{}l@{}}
        \begin{minipage}{.19\textwidth}
          \includegraphics[width=\linewidth]{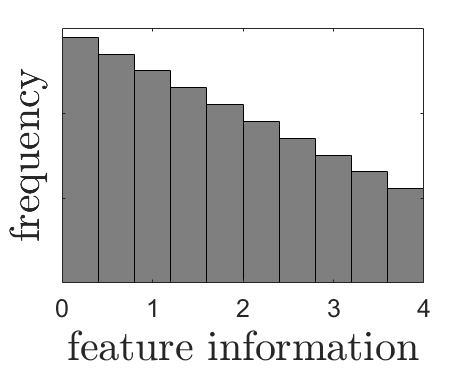}
        \end{minipage} \\
        $\pro\left(S\right)_l = \cfrac{1}{99} - \cfrac{{bin}_l}{9900}$ \\
        \ 
        \end{tabular} \\\hline\hline
        \begin{tabular}{@{}l@{}} $q_B = \cfrac{1}{n} \sum_{j=1}^n - s[j]$ \\ $\pro\left(B = 1\mid \hat{S},M\right) = \cfrac{1}{4}q_B+1$ \end{tabular} & 
        \begin{minipage}{.19\textwidth}
          \includegraphics[width=\linewidth]{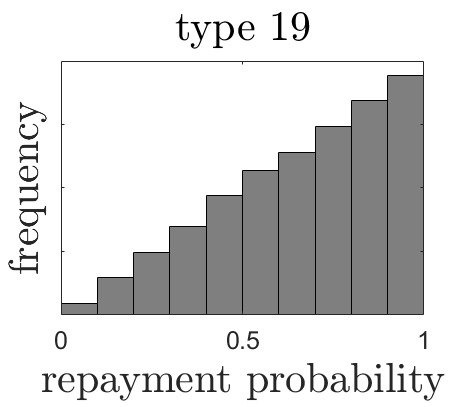}
        \end{minipage} & 
        \begin{minipage}{.19\textwidth}
          \includegraphics[width=\linewidth]{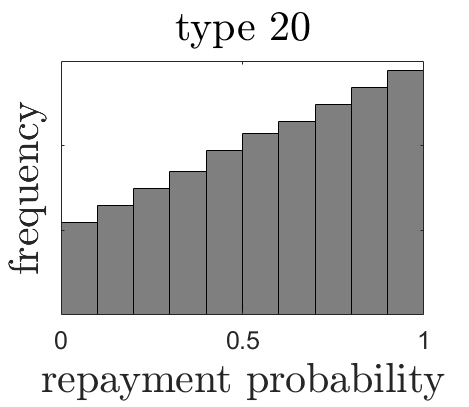}
        \end{minipage} \\\hline
        \begin{tabular}{@{}l@{}} $q_B = \cfrac{1}{n} \sum_{j=1}^n -2 s[j]$ \\ $\pro\left(B = 1\mid \hat{S},M\right)$\\
        $ = -\cfrac{1}{128} q_B^2 + \cfrac{1}{16} q_B + 1$ \end{tabular} &
        \begin{minipage}{.19\textwidth}
          \includegraphics[width=\linewidth]{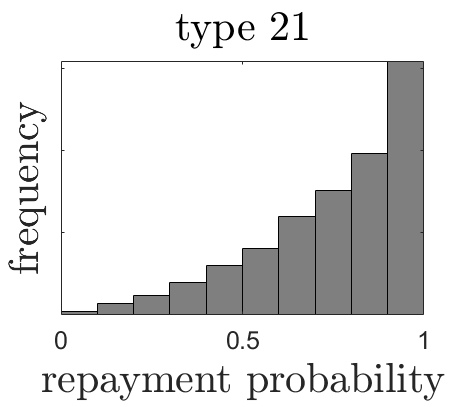}
        \end{minipage} &
        \begin{minipage}{.19\textwidth}
          \includegraphics[width=\linewidth]{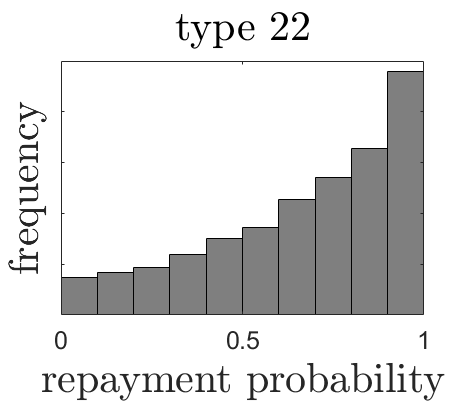}
        \end{minipage} \\\hline
        \begin{tabular}{@{}c@{}} $q_B = \cfrac{1}{n} \sum_{j=1}^n -3 s[j] + 7$ \\ $\pro\left(B = 1\mid \hat{S},M\right) = \cfrac{\exp(q_B)}{1+\exp(q_B)}$ \end{tabular} & 
        \begin{minipage}{.19\textwidth}
          \includegraphics[width=\linewidth]{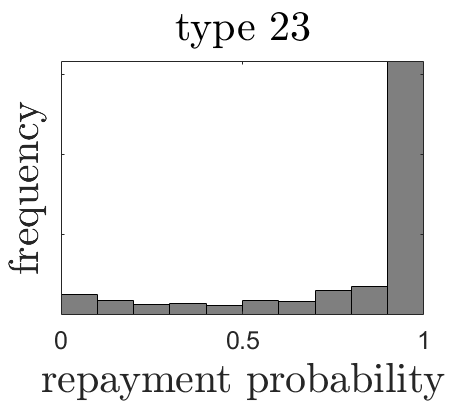}
        \end{minipage} & 
        \begin{minipage}{.19\textwidth}
          \includegraphics[width=\linewidth]{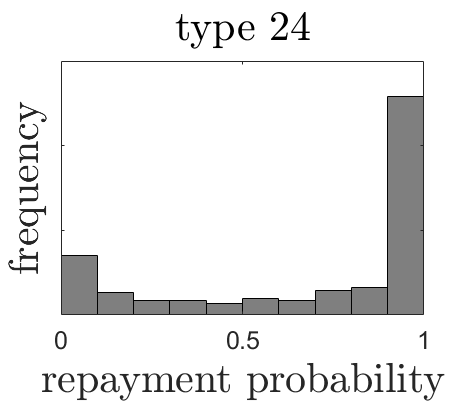}
        \end{minipage} \\\hline
        \begin{tabular}{@{}c@{}} $\{\mathscr{W}[1],...,\mathscr{W}[n]\} \sim \mathcal{N}(-3,4^2)$ \\ $q_B = \cfrac{1}{n} \sum_{j=1}^n \mathscr{W}[j] s[j] + 7$ \\ $\pro\left(B = 1\mid \hat{S},M\right) = \cfrac{\exp(q_B)}{1+\exp(q_B)}$ \end{tabular} & 
        \begin{minipage}{.19\textwidth}
          \includegraphics[width=\linewidth]{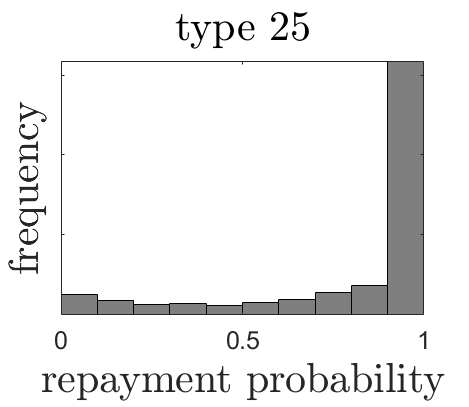}
        \end{minipage} & 
        \begin{minipage}{.19\textwidth}
          \includegraphics[width=\linewidth]{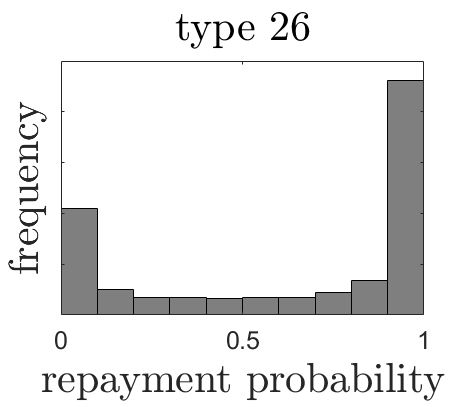}
        \end{minipage} \\\hline
    \end{tabular}
    \caption{List of the repayment probability distributions considered in our study with more negative weights for the features.}
    \label{tab:distribution list flip}
\end{table*}

\begin{table*}[htbp]
    \centering
    \begin{tabular}{|c||c|l|}\hline
        \begin{tabular}{@{}c@{}} features distribution, \\ $\pro (S)$ \end{tabular} & 
        \multicolumn{2}{c|}{repayment probability, $\pro\left(B = 1\mid \hat{S},M\right)$} \\\hline\hline
        \begin{tabular}{@{}c@{}}
        \begin{minipage}{.19\textwidth}
          \includegraphics[width=\linewidth]{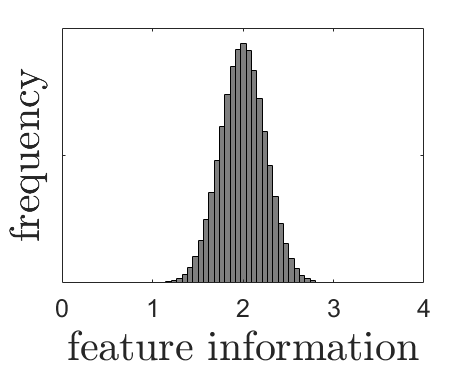}
        \end{minipage} \\
        $S \sim \mathcal{N}\left(2,0.25^2\right)$
        \end{tabular} & 
        \begin{minipage}{.19\textwidth}
          \includegraphics[width=\linewidth]{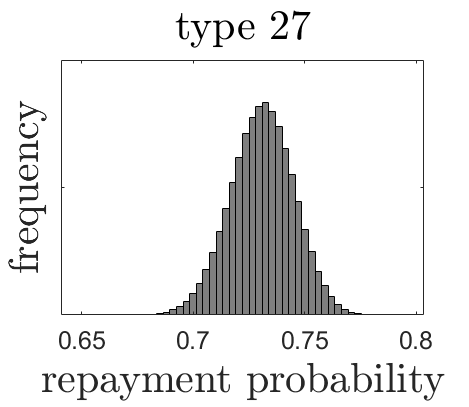}
        \end{minipage} &
        \begin{tabular}{@{}l@{}} $\mathscr{W}[j] = \cfrac{5}{99}(j-1)$ \\ $q_B = \cfrac{1}{n} \sum_{j=1}^n \mathscr{W}[j] s[j] - 4$ \\ $\pro\left(B = 1\mid \hat{S},M\right) = \cfrac{\exp(q_B)}{1+\exp(q_B)}$ \end{tabular} \\\hline
        \begin{tabular}{@{}c@{}}
        \begin{minipage}{.19\textwidth}
          \includegraphics[width=\linewidth]{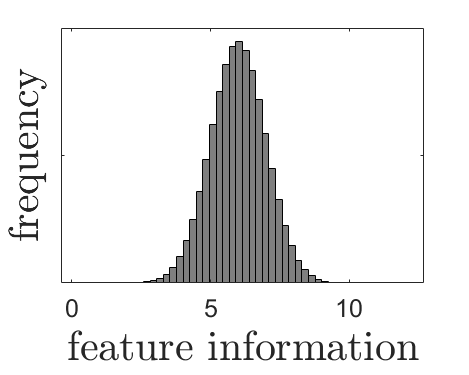}
        \end{minipage} \\
        $S \sim \mathcal{N}\left(6,1^2\right)$
        \end{tabular} & 
        \begin{minipage}{.19\textwidth}
          \includegraphics[width=\linewidth]{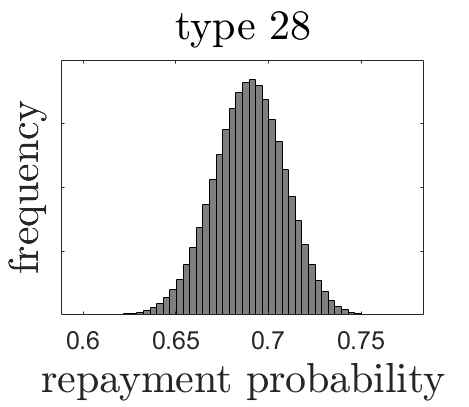}
        \end{minipage} &
        \begin{tabular}{@{}l@{}} $\mathscr{W}[j] = \begin{cases} 1.5 - \cfrac{7}{495}\left(j-1\right), & j \leq \cfrac{n}{2} \\ 0.1 + \cfrac{7}{495}\left(j-\cfrac{n}{2}-1\right), & j > \cfrac{n}{2} \end{cases}$ \\ $q_B = \cfrac{1}{n} \sum_{j=1}^n \mathscr{W}[j] s[j] - 4$ \\ $\pro\left(B = 1\mid \hat{S},M\right) = \cfrac{\exp(q_B)}{1+\exp(q_B)}$ \end{tabular} \\\hline
        \begin{tabular}{@{}c@{}}
        \begin{minipage}{.19\textwidth}
          \includegraphics[width=\linewidth]{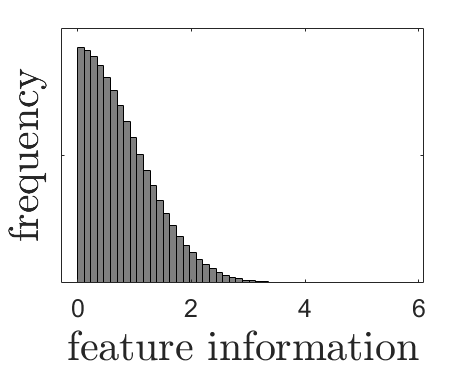}
        \end{minipage} \\
        $S \sim \left|\mathcal{N}\left(0,1^2\right)\right|$
        \end{tabular} & 
        \begin{minipage}{.19\textwidth}
          \includegraphics[width=\linewidth]{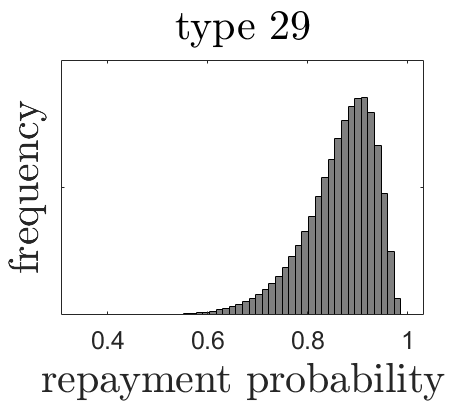}
        \end{minipage} &
        \begin{tabular}{@{}l@{}} $\mathscr{W}[j] = 20 - \cfrac{25}{99}(j-1)$ \\ $q_B = \cfrac{1}{n} \sum_{j=1}^n \mathscr{W}[j] s[j] - 4$ \\ $\pro\left(B = 1\mid \hat{S},M\right) = \cfrac{\exp(q_B)}{1+\exp(q_B)}$ \end{tabular} \\\hline
        \begin{tabular}{@{}c@{}}
        \begin{minipage}{.19\textwidth}
          \includegraphics[width=\linewidth]{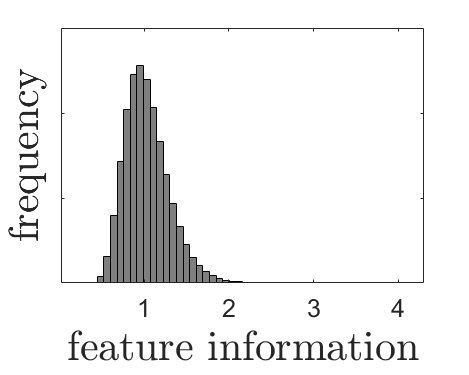}
        \end{minipage} \\
        $S \sim \exp\left(\mathcal{N}\left(0,0.25^2\right)\right)$
        \end{tabular} & 
        \begin{minipage}{.19\textwidth}
          \includegraphics[width=\linewidth]{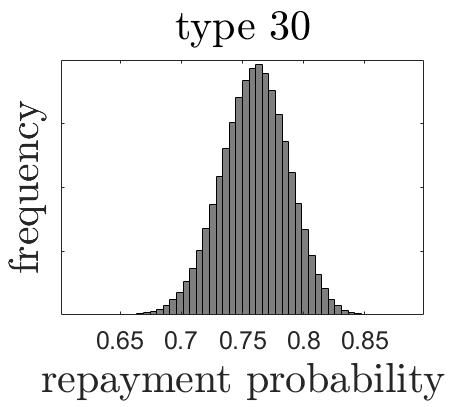}
        \end{minipage} &
        \begin{tabular}{@{}l@{}} $\mathscr{W}[j] = \cfrac{10}{99}(j-1)$ \\ $q_B = \cfrac{1}{n} \sum_{j=1}^n \mathscr{W}[j] s[j] - 4$ \\ $\pro\left(B = 1\mid \hat{S},M\right) = \cfrac{\exp(q_B)}{1+\exp(q_B)}$ \end{tabular} \\\hline
    \end{tabular}
    \caption{List of the repayment probability distributions considered in our study with unbounded features distributions.}
    \label{tab:distribution list unbounded}
\end{table*}

\end{document}